\DeclareMathOperator{\hypergeom}{Hypergeometric}
\title{Robust Testing in High-Dimensional Sparse Models}
\author{Anand Jerry George\thanks{Contribution order.}\\
\'Ecole Polytechnique F\'ed\'erale de Lausanne (EPFL)\\
\href{mailto:anand.george@epfl.ch}{anand.george@epfl.ch}
\and
Cl\'ement L. Canonne\\
University of Sydney\\
\href{mailto:clement.canonne@sydney.edu.au}{clement.canonne@sydney.edu.au}}
\date{}
\begin{document}

\maketitle

\begin{abstract}
    We consider the problem of robustly testing the norm of a high-dimensional sparse signal vector under two different observation models. In the first model, we are given $n$ i.i.d.\ samples from the distribution $\cN{\theta,I_d}$ (with unknown $\theta$), of which a small fraction has been arbitrarily corrupted. Under the promise that $\norm{\theta}_0\le s$, we want to correctly distinguish whether $\normtwo{\theta}=0$ or $\normtwo{\theta}>\gamma$, for some input parameter $\gamma>0$. We show that any algorithm for this task requires $n=\bigOmega{s\log\frac{ed}{s}}$ samples, which is tight up to logarithmic factors. We also extend our results to other common notions of sparsity, namely, $\norm{\theta}_q\le s$ for any $0 < q < 2$. In the second observation model that we consider, the data is generated according to a sparse linear regression model, where the covariates are i.i.d.\ Gaussian and the regression coefficient (signal) is known to be $s$-sparse. Here too we assume that an $\eps$-fraction of the data is arbitrarily corrupted. We show that any algorithm that reliably tests the norm of the regression coefficient requires at least $n=\bigOmega{\min(s\log d,{1}/{\gamma^4})}$ samples. Our results show that the complexity of testing in these two settings significantly increases under robustness constraints. This is in line with the recent observations made in robust mean testing and robust covariance testing.
\end{abstract}


\section{Introduction}
Hypothesis testing is a fundamental task in statistics and a staple of the scientific method, in which we seek to test the validity of a pre-specified hypothesis based on empirical observations. In this work, we are concerned with the problem of testing whether a given high-dimensional sparse signal vector is zero under two common and well-studied observation models: 1) the Gaussian location model and 2) the Gaussian linear regression model.
Specifically, in the Gaussian location model, we observe i.i.d.\ samples from a $d$-dimensional spherical Gaussian distribution with an unknown sparse mean vector, and seek to detect whether its $\ell_2$ norm is large (equivalently, we observe a set of measurements subject to white noise, and seek to determine whether there exists an underlying (sparse) signal). Similarly, in the Gaussian linear regression model we seek to detect whether the $\ell_2$ norm of the sparse regression coefficient is large. We further assume that our samples are imperfect or even corrupted, allowing an adversary to arbitrarily tamper with up to an $\eps$-fraction of the observations. Our objective is to characterize the minimum number of samples required to perform these testing tasks, and, crucially, to understand the effect that requiring robustness to this adversarial corruption has on the complexity of the problems.

It is known~\cite{DKS17,diakonikolas21a_robCov} that, for a variety of high-dimensional tasks, robust testing becomes as costly (in terms of sample complexity) as the corresponding \emph{estimation} task. This is in contrast to the non-robust version, where testing is typically much more efficient -- often by a quadratic factor in the dimension. However, it is unclear how sparsity enters the picture, and for instance if robustness only starts becoming ``costly'' when the signal vector is sufficiently dense -- i.e., whether the problem exhibits a phase transition. This is particularly relevant, as the non-robust versions of the problems we consider are known to present such a phase transition at sparsity $s\approx \sqrt{d}$.
\begin{framed}\itshape
\noindent How does robustness affect the sample and computational complexities of testing norm of the signal vector in high-dimensional sparse models? Does testing remain easier than learning? 
\end{framed}

This type of question, framed in a minimax setting, sits at the intersection of theoretical computer science (where it is captured under the framework of distribution testing) and robust statistics. Specifically, for $\Theta\subseteq\R^d$, let $\{\p_\theta\}_{\theta\in\Theta}$ be a family of distributions and $\p_0$ be a reference distribution (simple null hypothesis--in our case the standard Gaussian). Then, we say that an algorithm $T$ reliably tests the $\ell_2$-norm of $\theta$ if it satisfies the condition 
\begin{equation}
    \label{eq:hypothesis:testing:task}
    \max\Big\{\probaDistrOf{X\sim \p_0^n}{T(X)=\reject},\sup_{\substack{\theta\in\Theta\\ \norm{\theta}_2 \geq \dst}} \probaDistrOf{X\sim \p_\theta^n}{T(X)=\accept}\Big\} \le \delta
\end{equation}
where $\delta\in(0,1]$ is the failure probability, which following the literature we will hereafter set to $1/3$.\footnote{The choice of the value 1/3 here is arbitrary, and any fixed value greater than 1/2 would suffice, as one can amplify the success probability to $1-\delta$, for any $\delta>0$, using a standard majority vote.} The quantity of interest here is the \emph{sample complexity}, that is the minimum number of samples $n$ required by any algorithm to solve the problem.

The above task, however, assumes access to ``perfect'' samples from the unknown distribution $\p_\theta$. This is often an unrealistic assumption, as a fraction of the $n$ samples could be imperfect or corrupted. This motivates the setting of \emph{robust} testing. The problem is then similar to the formulation in~\eqref{eq:hypothesis:testing:task}, with a crucial difference: the algorithm $T$ does not have access to the i.i.d. samples $X=(X_1,\dots,X_n)\in\R^{d\times n}$, but instead to a ``contaminated'' version $\tilde{X}=(\tilde{X}_1,\dots,\tilde{X}_n)\in\R^{d\times n}$ obtained by arbitrarily modifying up to $\eps n$ of the $X_i$'s (i.e., an $\eps$-fraction). We will refer to this as the \emph{$\eps$-corruption model}.

In this work we consider two instances of the general testing task (\ref{eq:hypothesis:testing:task}) in the robust testing setting. First, let us introduce some notation common to the problems. For $q\in(0,2)$ let
\begin{equation}
        \mathcal{B}_{s,q} \eqdef \{ \theta \in \R^d: \norm{\theta}_q \leq s \}
\end{equation}
be the $\ell_q-$ball of radius $s$ in $\R^d$. For $q=0$, we get the usual notion of sparsity, and will simply write $\mathcal{B}_{s}$ for $0\leq s\leq d$. Let $\cN{\theta,I_d}$ denote the $d$-dimensional Gaussian with mean $\theta\in\R^d$ and identity covariance.

The first problem that we consider is the \emph{sparse Gaussian mean testing}, in which, given an $\eps$-corrupted dataset of $n$ samples from $\cN{\theta,I_d}$, where $\theta\in \mathcal{B}_{s,q}$ is unknown, our goal is to robustly distinguish between (1)~$\norm{\theta}_2 = 0$ , and (2)~$\norm{\theta}_2 \geq \dst$ (equivalently, the total variation distance between $\cN{\theta,I_d}$ and the standard Gaussian $\cN{0,I_d}$ is $\Omega(\gamma)$), for some input parameter $\gamma \in (0,1]$.

The second problem that we consider is \emph{testing in the sparse linear regression model}. In the sparse linear regression model the data is generated according to the following process: Let $X_1,X_2,\cdots,X_n$ be i.i.d. samples from $\cN{0,I_d}$. Let $\theta\in\cB_s$ be unknown and let $\xi_i$'s be i.i.d.\ samples from $\cN{0,1}$ (and independent from the $X_i$'s), for $1\le i\le n$. Then, the $y_i$'s are generated as follows:
\begin{equation}\label{SLR_model}
    y_i = \ip{X_i}{\theta} + \xi_i\quad\text{ for all } 1\le i\le n.
\end{equation}
 Note that for a given $\theta$, the joint distribution of $(X_i,y_i)$ is  $\cN{0,\Sigma_\theta}$, where $\Sigma_\theta = \begin{bmatrix}I_d & \theta\\
\theta^T & 1+\norm{\theta}^2\end{bmatrix}$. Our aim is to robustly distinguish between (1)~$\norm{\theta}_2 = 0$ , and (2)~$\norm{\theta}_2 \geq \dst$, given an $\eps$-corrupted version of the observations $(X_1,y_1),(X_2,y_2),\cdots,(X_n,y_n)$.

 Note that for both the problems, one can restrict themselves to the case $\gamma \geq \eps$, as otherwise the problem becomes trivially information-theoretically impossible.
\subsection{Our Contributions}\label{subsec:contribution}
Our main contribution are the characterization of the sample complexity of robust sparse Gaussian mean testing for a range of notions of sparsity, and a lower bound on the sample complexity of robust testing in the sparse linear regression model. Together, these results fully answer the above question, and provide more evidence to the belief that ``robustness requirements make the testing tasks as hard as the corresponding estimation tasks.'' To establish our lower bounds, we draw upon and combine a variety of methods from the literature, in order to upper bound the $\chi^2$-divergence between a point and a mixture distribution before concluding by Le Cam's two-point method. We elaborate further on those aspects below. 
\subsubsection{Sparse Gaussian Mean Testing}
It is known~\cite{Olivier17} that, in the \emph{non-robust} setting described in~\eqref{eq:hypothesis:testing:task}, the sample complexity of sparse Gaussian mean testing is
\begin{equation*}
n_0(s,d,\gamma)=\begin{cases}
			\bigTheta{\frac{s}{\gamma^2}\log\left(1+\frac{d}{s^2}\right)} & \text{if  }s<\sqrt{d}\\
            \bigTheta{\frac{\sqrt{d}}{\gamma^2}} & \text{if  }s\ge \sqrt{d}
		 \end{cases}
\end{equation*}
for $q=0$, and, for $q\in(0,2)$,
\begin{equation*}
n_q(s,d,\gamma)=\begin{cases}
			\bigTheta{\frac{m}{\gamma^2}\log\mleft(1+\frac{d}{m^2}\mright)} & \text{if  }m<\sqrt{d}\\
            \bigTheta{\frac{\sqrt{d}}{\gamma^2}} & \text{if  }m\ge \sqrt{d}
		 \end{cases},    
\end{equation*}
where $m \eqdef \max\{u\in[d]:\gamma^2u^{\frac{2}{q}-1}\le s^2\}$ is the \emph{effective sparsity}. In particular, both sample complexities present a phase transition at $\sqrt{d}$, after which the sparsity no longer helps decreasing the sample complexity of the problem, which defaults to the ``folklore'' non-sparse bound of $\bigTheta{\frac{\sqrt{d}}{\gamma^2}}$.

Our main result in this setting is a lower bound on \emph{robust} sparse mean testing, which shows a significantly different landscape:
\begin{theorem}[Informal; see~\cref{prop:robSMT_tight,prop:rob_lqSMT}]
    \label{theo:lb:allsparsity}
For every constant $\eps, \dst$, the sample complexity of robust sparse Gaussian mean testing in the $\eps$-corruption model is
\begin{equation*}
			\bigOmega{ s\log\frac{ed}{s} } 
\end{equation*}
for $q=0$, and
$
			\bigOmega{ m\log\frac{ed}{m} } 
$
for $q\in(0,2)$, where $m= \max\{u\in[d]:\gamma^2u^{\frac{2}{q}-1}\le s^2\}$.
\end{theorem}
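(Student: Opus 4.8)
The plan is to prove the lower bound by Le Cam's two-point method in its mixture (composite-alternative) form: I would exhibit one null law and one prior-averaged alternative that no tester can separate unless $n=\bigOmega{s\log\frac{ed}{s}}$. Take the null to be the uncorrupted product $\p_0=\cN{0,I_d}^{\otimes n}$ (admissible, since zero corruption is allowed), and define a prior $\pi$ over alternatives by drawing a uniformly random support $S\in\binom{[d]}{s}$ together with i.i.d.\ signs $\sigma_j\in\{\pm1\}$, and setting $\theta=\frac{\gamma}{\sqrt{s}}\sum_{j\in S}\sigma_j e_j$, so that $\theta\in\cB_s$ and $\normtwo{\theta}=\gamma$. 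The robustness-specific step is the corruption: for each such $\theta$ I would replace the clean marginal on every active coordinate by an $\eps$-contaminated mixture $q_{\sigma_j}=(1-\eps)\cN{\sigma_j\mu,1}+\eps A_{\sigma_j}$ with $\mu=\gamma/\sqrt{s}$, where $A$ is chosen so that $q$ matches the standard Gaussian in its lowest-order moments (in particular the first). Taking the joint contamination to be the product of these per-coordinate pieces keeps $Q_\theta=(1-\eps)\cN{\theta,I_d}+\eps A_\theta$ a valid $\eps$-corruption with independent coordinates, which is what makes the computation tractable. Suppressing the low-order moments is exactly the ``robustness makes testing as hard as estimation'' mechanism: it removes the cheap linear signal and forces any tester to rely on much weaker higher-order Hermite information.

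Next I would bound $\chi^2(\bar{\p}\,\|\,\p_0)$ for the mixture $\bar{\p}=\E_{\theta\sim\pi}[Q_\theta^{\otimes n}]$ and use $d_{\mathrm{TV}}(\bar{\p},\p_0)\le\tfrac12\sqrt{\chi^2(\bar{\p}\|\p_0)}$ to conclude impossibility whenever this divergence is $o(1)$. Expanding the square and tensorizing $\chi^2$ across the $n$ samples and the $d$ coordinates, the divergence collapses to an expectation over two independent draws $(S,\sigma),(S',\sigma')$,
\begin{equation*}
\chi^2(\bar{\p}\|\p_0)+1=\E_{S,S',\sigma,\sigma'}\Big[\prod_{j\in S\cap S'}\big(1+\rho_{\sigma_j\sigma'_j}\big)^{n}\Big],
\end{equation*}
where $\rho_{\sigma\sigma'}=\langle q_{\sigma}/\phi-1,\,q_{\sigma'}/\phi-1\rangle_{\phi}$ is the per-coordinate $\chi^2$-correlation and $\phi$ is the standard Gaussian density. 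Writing $\rho_{\sigma\sigma'}$ in the Hermite basis, the sign averaging annihilates the odd-degree contributions and the moment matching annihilates the lowest even ones, so that the surviving per-coordinate factor $g\eqdef\E_{\sigma,\sigma'}[(1+\rho_{\sigma\sigma'})^{n}]$ is as close to $1$ as the $\eps$-budget permits. This reduces everything to the support-overlap expectation $\E_{S,S'}[g^{|S\cap S'|}]$.

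It then remains to control $\E_{S,S'}[g^{|S\cap S'|}]$, where $|S\cap S'|$ is hypergeometric with mean $s^2/d$; a stochastic-domination (or direct tail) argument reduces matters to certifying that this expectation is $1+o(1)$. This is where the $\log\frac{ed}{s}$ factor enters, since it is the overlap tail---equivalently the $\log\binom{d}{s}$ ``cost'' of locating the support---that dictates how large $n$ may be kept while the divergence stays bounded. I expect this $\chi^2$ estimate to be the main obstacle, for two intertwined reasons: (i) one must design the contamination so that $g-1$ is suppressed to precisely the right order while the construction remains a genuine product corruption with $\normtwo{\theta}=\gamma$; and (ii) the naive bound on $\E[g^{|S\cap S'|}]$ is dominated by atypically large overlaps, so a careful (likely truncated or conditional) second-moment estimate is needed to extract the sharp rate rather than a weaker one. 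Finally, the $\ell_q$ statement should follow by rerunning the same argument with the support size and per-coordinate magnitude retuned to the effective sparsity $m=\max\{u\in[d]:\gamma^2u^{2/q-1}\le s^2\}$, checking that the resulting $\theta$ lies in $\cB_{s,q}$ and repeating the divergence bound verbatim with $m$ in place of $s$.
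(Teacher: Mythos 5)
Your overall skeleton---Le Cam with a uniformly-random-support, Rademacher-signed prior, a first-moment-matched contamination, the Ingster--Suslina $\chi^2$ expansion, and the hypergeometric overlap tail supplying the $\log\frac{ed}{s}$ factor---is exactly the paper's strategy, and your $\ell_q$ reduction (retune to the effective sparsity $m$ and check membership in the $\ell_q$ ball) is verbatim what the paper does. But there is a genuine gap in the construction of the hard instance. You take the contaminated alternative to be the \emph{product over coordinates} of per-coordinate mixtures $(1-\eps)\cN{\sigma_j\mu,1}+\eps A_{\sigma_j}$ and assert that this ``keeps $Q_\theta=(1-\eps)\cN{\theta,I_d}+\eps A_\theta$ a valid $\eps$-corruption.'' It does not: the weight that such a product places on the fully clean component $\cN{\theta,I_d}$ is $(1-\eps)^s$, not $1-\eps$, so $Q_\theta$ lies only in the $\bigl(1-(1-\eps)^s\bigr)$-Huber ball around $\cN{\theta,I_d}$ --- roughly an $s\eps$-corruption. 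For constant $\eps$ and growing $s$ this is an almost totally corrupted distribution, so the alternative hypothesis is inadmissible and the lower bound proves nothing about $\eps$-corruption. The same flaw propagates to your divergence formula $\E\bigl[\prod_{j\in S\cap S'}(1+\rho_{\sigma_j\sigma'_j})^{n}\bigr]$, whose coordinate-wise tensorization is only available because of this illegitimate product structure.

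The fix, which is what the paper does, is to corrupt \emph{globally} rather than coordinate-wise: take $\q_\theta=(1-\eps)\cN{\gamma\theta,I_d}+\eps\,\cN{-\tfrac{1-\eps}{\eps}\gamma\theta,I_d}$, a genuine two-component $\eps$-Huber mixture whose overall mean is zero. This $\q_\theta$ is not a product across coordinates, so one computes the full $d$-dimensional $\chi^2$-correlation directly: writing $\q_\theta=\E_b[\cN{b\gamma\theta,I_d}]$ with $b\sim(1-\eps)\delta_1+\eps\delta_{-(1-\eps)/\eps}$, the Gaussian integral gives $1+\chi(\q_\theta,\q_{\theta'})=\E_{b,b'}\bigl[e^{\gamma^2 bb'\langle\theta,\theta'\rangle}\bigr]$, and since $\E[b]=0$ the linear term in $\langle\theta,\theta'\rangle$ cancels, yielding the bound $\exp\bigl(\gamma^4\langle\theta,\theta'\rangle^2/\eps^4\bigr)$. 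Note that only the \emph{first} moment is matched here --- $\E[b^2]\approx 1/\eps$ is large, and the surviving quadratic term is even in the signs, so your hope that sign-averaging plus moment matching kills the low-order even contributions is not how the argument closes. Instead one must control $\E\bigl[\exp(\lambda Y^2)\bigr]$ where $Y=\langle\beta,\beta'\rangle$ is a Rademacher walk stopped at a $\mathrm{Hypergeometric}(d,s,s)$ time; the Cai--Ma--Wu moment-generating-function lemma shows this stays $O(1)$ precisely when $\lambda\lesssim\frac{1}{s}\log\frac{ed}{s}$, which is where the claimed rate comes from. With that replacement, the rest of your plan goes through.
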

Moreover, our bound for standard $s$-sparsity is tight, in view of the known \smash{$\bigO{ \frac{s}{\dst^2}\log\frac{ed}{s} }$} sample complexity bound for robust sparse mean \emph{estimation}~\cite{Li2017robust,Diakonikolas19recent} (which implies the same bound for robust testing). This not only shows that the robust testing problem is much harder than its non-robust counterpart especially in the dense regime (and actually as hard as the robust \emph{estimation} problem), but also that the robust setting no longer presents any threshold phenomenon. If we set $s=d$, we further recover the result in \cite{DKS17} that robustness requirement increases the sample complexity of Gaussian mean testing. Figure~\ref{fig:cmplxtyComp} illustrates the sample complexity of robust and non-robust sparse Gaussian mean testing as a function of the sparsity. 

\begin{figure}
    \centering
    \includegraphics[scale=0.7]{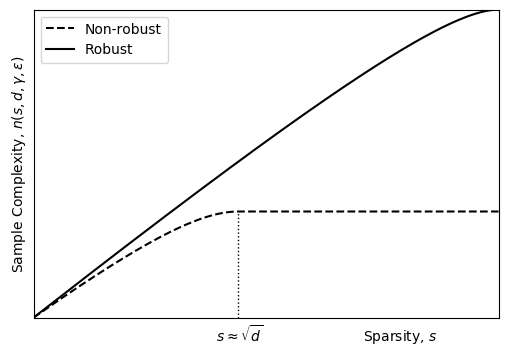}
    \caption{Sample complexity as a function of sparsity for non-robust and robust sparse Gaussian mean testing (the behavior applies to testing in the sparse regression model as well).vs. Sparsity }
    \label{fig:cmplxtyComp}
\end{figure}

The tightness of our bound, however, only follows from previous work in the case $q=0$ (standard sparsity). We provide a (near) matching upper bound for the case $q\in(0,2)$, which essentially resolves the question: showing that the aforementioned hardness and disappearance of a phase transition apply to all types of sparsity.

\begin{theorem}[Informal; see~\cref{prop:UB_lq_sme}]
    \label{theo:ub:qsparsity}
For every $\eps, \dst$, the sample complexity of robust sparse Gaussian mean testing in the $\eps$-corruption model is
\begin{equation*}
			\bigO{ \frac{m}{\eps^2}\log\frac{ed}{\dst} } 
\end{equation*}
for  $q\in(0,2)$, where $m= \max\{u\in[d]:\gamma^2u^{\frac{2}{q}-1}\le s^2\}$.
\end{theorem}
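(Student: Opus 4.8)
The plan is to exhibit an algorithm matching the claimed bound. Since only the sample complexity matters, I am free to use a computationally inefficient but information-theoretically optimal robust estimator. At a high level I would (i) reduce the $\ell_q$-ball problem to an \emph{approximately} $\ell_0$-sparse problem governed by the effective sparsity $m$, and then (ii) run a robust sparse mean estimation procedure and threshold the norm of its output. This mirrors the $q=0$ case noted in the excerpt, where the robust testing upper bound follows from robust sparse mean estimation to error $\gamma/2$ at sample cost $\bigO{\frac{s}{\gamma^2}\log\frac{ed}{s}}$.

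\emph{Reduction to approximate sparsity.} Fix $\theta\in\cB_{s,q}$ with $\normtwo{\theta}\ge\gamma$ and let $\theta_{(1)}\ge\theta_{(2)}\ge\cdots\ge 0$ be its sorted coordinate magnitudes. From $k\,\theta_{(k)}^q\le\norm{\theta}_q^q\le s^q$ we get $\theta_{(k)}\le s\,k^{-1/q}$, so the best $K$-sparse approximation $\theta^{(K)}$ (keeping the top $K$ coordinates) satisfies $\sum_{k>K}\theta_{(k)}^2\le s^2\sum_{k>K}k^{-2/q}\le \frac{s^2 K^{1-2/q}}{2/q-1}$, using $2/q>1$. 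The definition $m=\max\{u\in[d]:\gamma^2 u^{2/q-1}\le s^2\}$ gives $s^2 m^{1-2/q}=\Theta_q(\gamma^2)$ (for the upper direction, $s^2<\gamma^2(m+1)^{2/q-1}$ together with $m\ge 1$ yields $s^2 m^{1-2/q}\le 2^{2/q-1}\gamma^2$). Hence choosing $K=C_q\,m$ for a large enough constant $C_q$ (which blows up as $q\to 2$) makes the tail at most $\gamma^2/100$, so $\norm{\theta-\theta^{(K)}}_2\le\gamma/10$ and $\normtwo{\theta^{(K)}}\ge 9\gamma/10$. Thus every alternative is $(\gamma/10)$-close to a genuinely $K$-sparse vector of norm at least $9\gamma/10$, with $K=\Theta_q(m)$.

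\emph{Robust estimation and thresholding.} I would run a robust $K$-sparse Gaussian mean estimator on the $\eps$-corrupted data to produce $\hat\theta$ and output \reject iff $\normtwo{\hat\theta}\ge\gamma/2$. The known guarantees for robust sparse mean estimation state that $n=\bigO{\frac{K}{\alpha^2}\log\frac{ed}{K}}$ samples suffice to reach $\ell_2$-error $\alpha$, provided $\alpha\gtrsim\eps\sqrt{\log(1/\eps)}$; taking $\alpha=\gamma/4$ and $K=\Theta_q(m)$ gives $n=\bigO{\frac{m}{\gamma^2}\log\frac{ed}{m}}$, which, since $\gamma\ge\eps$ and $m\ge 1\ge\gamma$, lies within the claimed $\bigO{\frac{m}{\eps^2}\log\frac{ed}{\gamma}}$. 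Folding the $(\gamma/10)$ tail from the reduction in additively, $\normtwo{\hat\theta-\theta}\le\gamma/4+\gamma/10<\gamma/2$. Correctness is then immediate: under the null $\theta=0$ we get $\normtwo{\hat\theta}<\gamma/2$, while under the alternative $\normtwo{\hat\theta}\ge\normtwo{\theta}-\gamma/2\ge\gamma/2$.

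The main obstacle is making the estimation step rigorous when $\theta$ is only \emph{approximately} $K$-sparse: the cited guarantees are typically stated for an exactly sparse mean, so I would need the best-$K$-sparse-approximation error $\norm{\theta-\theta^{(K)}}_2$ to enter the final bound additively. This holds for stability- and minimum-distance-based robust estimators, since a deterministic mean shift of small $\ell_2$-norm perturbs the relevant low-dimensional moment statistics by $\bigO{\norm{\theta-\theta^{(K)}}_2}$; verifying this, and tracking that the sample complexity for $K=\Theta_q(m)$ indeed carries the advertised $\frac{1}{\eps^2}\log\frac{ed}{\gamma}$ dependence, is the technical heart. A secondary point to treat cleanly is the boundary regime where $\gamma$ is only a constant multiple of $\eps$, in which the robustness floor $\eps\sqrt{\log(1/\eps)}$ and the target accuracy $\gamma/4$ become comparable; this is consistent with the feasibility condition $\gamma\ge\eps$ up to the $\sqrt{\log(1/\eps)}$ factor, which is a constant for constant $\eps$.
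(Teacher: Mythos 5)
Your first step---approximating any $\theta\in\cB_{s,q}$ with $\normtwo{\theta}\ge\gamma$ by its best $K$-term approximation and showing the discarded tail is $O(\gamma)$ once $K=\Theta_q(m)$---is exactly the reduction underlying the paper's proof of \cref{prop:UB_lq_sme}, though the paper obtains the cleaner tail bound $\sum_{j\in J^C}\mu_j^2\le\norm{\mu}_q^{2}(m+1)^{1-2/q}$ by splitting $|\mu_j|^2=|\mu_j|^{2-q}|\mu_j|^q$ and bounding the $(m+1)$\textsuperscript{st} largest coordinate, which works with $K=m$ exactly and no $q$-dependent constant (your integral comparison forces $C_q\to\infty$ as $q\to2$). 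Where you diverge is the second step, and this is where your proposal has a real gap that you yourself flag: you invoke a robust \emph{exactly} $K$-sparse mean estimator as a black box and assert that the sparse-approximation error folds in additively for ``stability- and minimum-distance-based'' estimators. That assertion is plausible but is precisely the unproved technical content: the cited guarantees are stated for $\cN{\mu,I_d}$ with $\mu$ exactly sparse, and a deterministic mean perturbation of $\ell_2$-norm $\gamma/10$ is not the same as $\eps$-corruption (it can exceed the corruption budget when $\gamma\gg\eps$), so it cannot simply be absorbed into the adversary. The paper closes this gap by not using an estimator at all: it builds an explicit finite cover $\cM_{2B}$ of exactly $m$-sparse means on an $\eps/\sqrt{d}$-grid and applies the agnostic hypothesis-selection tournament (\cref{lemma:tournaments}), which only requires that \emph{some} candidate in the list be $O(\eps)$-close in total variation to the true (approximately sparse, contaminated) distribution---a condition your reduction already establishes---and pays $\frac{1}{\eps^2}\log|\cM_{2B}|=\bigO{\frac{m}{\eps^2}\log\frac{d}{\eps}}$ samples. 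To complete your route you would need either to verify the additive-perturbation stability for a specific estimator, or to replace that step with the tournament argument as the paper does.
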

Note that, for constant $\dst$, our upper bound only differs from the lower bound of~\cref{theo:lb:allsparsity} by a logarithmic dependence on the effective sparsity $m$. Finally, we note that while the upper bounds from~\cite{Li2017robust} and~\cref{theo:ub:qsparsity} are achieved by computationally inefficient algorithms (time complexity exponential in $s$), this is actually inherent; indeed,~\cite{brennan20a_SCgaps} recently proved that any computationally efficient algorithm for robust sparse mean \emph{estimation} must have much higher sample complexity, namely $\Omega(s^2)$. A simple inspection of their proof shows that this result extends to the robust \emph{testing} problem.

\subsubsection{Testing in Sparse Linear Regression Model}
From the results in \cite{carpentier19} we can deduce that the sample complexity of \emph{non-robust} testing in the sparse linear regression model is given by
\begin{equation*}
n_0(s,d,\gamma)=\begin{cases}
			\bigTheta{\min\mleft(\frac{s}{\gamma^2}\log\left(1+\frac{d}{s^2}\right), \frac{1}{\gamma^4} \mright)} & \text{if  }s<\sqrt{d}\\
            \bigTheta{\min\mleft(\frac{\sqrt{d}}{\gamma^2}, \frac{1}{\gamma^4} \mright)} & \text{if  }s\ge \sqrt{d}.
		 \end{cases}
\end{equation*}
This expression looks very similar to the sample complexity of sparse Gaussian mean testing, except for an additional ${1}/{\gamma^4}$ term. This term is essentially due to the fact that we can ignore the observations $X_i$'s and estimate $\gamma$ just by using $y_i$'s, since their variance is $1+\gamma^2$. This would require $\bigO{{1}/{\gamma^4}}$ samples. Nevertheless, the testing in sparse linear regression still exhibits a phase transition at $s\approx\sqrt{d}$ as was observed in the sparse Gaussian mean testing. It is thus natural to wonder whether the parallels between these two problems extend to the robust setting as well.
We show that this is indeed the case: the sample complexity of testing in sparse linear regression significantly increases when introducing the robustness condition.
\begin{theorem}[Informal; see~\cref{theo:robSLR_test}]\label{theo:robSLR_test_info}
For any sufficiently small $\gamma>0$, $\eps = \frac{\gamma}{C}$ for a sufficiently large $C$, and $s=d^{1-\delta}$ for any $\delta\in(0,1)$, the sample complexity of testing in sparse linear regression under $\eps$-corruption model is
\begin{equation*}
    \bigOmega{ \min\mleft(s\log d, \frac{1}{\gamma^4} \mright)}\,.
\end{equation*}
\end{theorem}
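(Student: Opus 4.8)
The plan is to prove the lower bound by Le Cam's two-point method in its \emph{point-versus-mixture} form, reducing the task to upper bounding a $\chi^2$-divergence. I take the null to be $\cN{0,I_{d+1}}$ (the case $\theta=0$) and build a family of alternatives indexed by a uniformly random $s$-subset $S\subseteq[d]$: set $\theta_S\eqdef \frac{\gamma}{\sqrt s}\mathbf 1_S$, so that $\norm{\theta_S}_2=\gamma$ and $\theta_S\in\cB_s$, and let the single-sample alternative be $\cN{0,\Sigma_{\theta_S}}$. Since the tester only sees an $\eps$-corrupted sample, I am free to replace $\cN{0,\Sigma_{\theta_S}}$ by \emph{any} distribution $Q_S$ with $d_{\mathrm{TV}}(Q_S,\cN{0,\Sigma_{\theta_S}})\le\eps$, and the null by any $P$ within $\eps$ of $\cN{0,I_{d+1}}$; if the mixture $\bar Q^{\otimes n}=\mathbb{E}_S[Q_S^{\otimes n}]$ is statistically indistinguishable from $P^{\otimes n}$ then no robust tester can succeed. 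Concretely, it suffices to exhibit such $P$ and $\{Q_S\}$ with $\chi^2(\bar Q^{\otimes n}\,\|\,P^{\otimes n})\le 1/3$ whenever $n$ is below $\min(s\log d,\,1/\gamma^4)$ up to constants. Note that $\eps=\gamma/C<\gamma$ lands exactly in the regime where the problem is not information-theoretically impossible, so a genuine sample-complexity statement is what we should expect.

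The heart of the construction is to spend the budget $\eps=\gamma/C$ to erase the low-order moment signatures separating $\cN{0,\Sigma_{\theta_S}}$ from $\cN{0,I_{d+1}}$. The covariance gap is $\Sigma_{\theta_S}-I_{d+1}=\left[\begin{smallmatrix} 0 & \theta_S \\ \theta_S^\top & \gamma^2 \end{smallmatrix}\right]$, whose dominant part is the rank-two, magnitude-$\gamma$ correlation between the coordinates of $X$ in $S$ and the response $y$, while the diagonal $\gamma^2$ bump in $\mathrm{Var}(y)$ is lower order. I would design the corruption to cancel this $X$--$y$ correlation by inserting an $\eps$-fraction of mass along the directions $(\widehat\theta_S,\pm 1)$, matching first and second moments as closely as the budget permits. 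The key structural feature driving the two terms of the bound is that this cancellation couples two residuals: any corruption that kills the correlation between $X_S$ and $y$ necessarily disturbs the $y$-marginal, leaving a support-\emph{independent} (common) second-moment residual in the $y$-direction, while simultaneously leaving a support-\emph{specific} higher-moment residual in the coordinates of $S$. Under the constraint $\eps=\gamma/C$ these two residuals trade off against each other, and I would fix the tuning so that the common $y$-residual sits at scale $\Theta(\gamma^2)$. Verifying that $\eps=\gamma/C$ with $C$ large genuinely realizes each $Q_S$ as a valid distribution within $\eps$ of the alternative, \emph{with all higher Hermite coefficients uniformly controlled}, is the first delicate point.

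For the $\chi^2$ step I take $P=\cN{0,I_{d+1}}$ and expand $g_S\eqdef(Q_S-P)/P$ in the multivariate Hermite basis, which is orthonormal in $L^2(P)$, so that with $\rho(S,S')\eqdef\langle g_S,g_{S'}\rangle_{L^2(P)}$ the Ingster--Suslina identity gives $1+\chi^2(\bar Q^{\otimes n}\,\|\,P^{\otimes n})=\mathbb{E}_{S,S'}[(1+\rho(S,S'))^n]$. The crucial cancellation is that product-Hermite modes supported on distinct coordinate multi-indices are orthogonal, so for $S\neq S'$ only modes living on $(S\cap S')\cup\{y\}$ survive in $\rho(S,S')$. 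This splits $\rho(S,S')$ into a common piece, coming from the $\Theta(\gamma^2)$ residual in the $y$-direction and contributing order $\gamma^4$ for \emph{every} pair, and a support-overlap piece scaling with $\abs{S\cap S'}$. Since $\abs{S\cap S'}$ is hypergeometric and, for $s=d^{1-\delta}$, concentrates at $s^2/d=d^{1-2\delta}$ with normalized overlap $d^{-\delta}\to 0$, evaluating $\mathbb{E}_{S,S'}[(1+\rho)^n]$ reduces to bounding the moment generating function of the overlap. The common $\gamma^4$ term forces $n=\bigO{1/\gamma^4}$, while the large-overlap (diagonal) contribution, suppressed by $\binom{d}{s}^{-1}\approx e^{-\Theta(s\log d)}$, forces $n=\bigO{s\log d}$; together these give $\chi^2\le 1/3$ for $n\lesssim\min(s\log d,\,1/\gamma^4)$.

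The main obstacle is the second step together with the coupling it forces into the third. Unlike the robust sparse mean-testing lower bound proved earlier in the paper---where the hidden signal lives purely in the mean and the corruption only reshapes second moments---here the signal is a joint second-moment (covariance) effect, so the corruption hiding it cannot decouple the support-specific and common-$y$ residuals: shrinking one inflates the other, and the stated $\min(s\log d,\,1/\gamma^4)$ is precisely the optimal operating point of this tradeoff under $\eps=\gamma/C$. Carrying this out rigorously requires (a) certifying the $\Theta(\gamma^2)$ size of the common residual while keeping every degree-$\ge 3$ Hermite coefficient of $g_S$ bounded, so that the corruption does not secretly expose an easily detectable statistic, and (b) a careful split of the overlap MGF into the near-disjoint regime (where $\delta>0$ keeps the typical overlap term negligible) and the rare heavy-overlap regime that the $\binom{d}{s}^{-1}$ factor must absorb. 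I expect the near-disjoint estimates to be routine once the Hermite coefficients are pinned down, with essentially all of the difficulty concentrated in (a) and in establishing that the common residual cannot be driven below $\Theta(\gamma^2)$ without paying in the support-specific modes.
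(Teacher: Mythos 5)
Your skeleton matches the paper's at the top level (reduce to Huber contamination, plant a sparse prior, apply Ingster--Suslina, extract the $1/\gamma^4$ term from the inflated $y$-marginal and the $s\log d$ term from the overlap combinatorics), but there is a genuine gap at the heart of the $\chi^2$ step. Any corruption that cancels the $X$--$y$ correlation of size $\gamma$ using only $\eps=\gamma/C$ mass must place second moments of order $p=(1-\eps)/\eps\approx C/\gamma$ in the inserted component (by Cauchy--Schwarz, $|\bE{\nu_S}{\langle X,\hat\theta_S\rangle y}|\approx C/\gamma \cdot \gamma = C$ forces $\bE{\nu_S}{\langle X,\hat\theta_S\rangle^2}$ or $\bE{\nu_S}{y^2}$ to be $\geq C$), and for the natural Gaussian choice --- an $\eps$-weight component with conditional mean $-p\cdot\frac{\gamma y\theta}{1+\gamma^2}$, which is what the paper uses --- the single-sample correlation $\rho(S,S')=\chi_{\p}(\q_\theta,\q_{\theta'})$ contains the term $\bE{y}{\frac{\q(y)^2}{\p(y)^2}\expon{\gamma^2p^2y^2\langle\theta,\theta'\rangle}}$ with $\gamma^2p^2\approx C^2$, a Gaussian integral that \emph{diverges} once $\langle\theta,\theta'\rangle\gtrsim 1/C^2$ --- in particular on and near the diagonal $S=S'$. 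So your plan to let the $\binom{d}{s}^{-1}$ prior mass absorb the heavy-overlap regime cannot work as stated: the integrand there is not merely large but infinite, and $\chisquare{\q}{\p^n}=\infty$. The paper's essential device, absent from your proposal, is the \emph{conditional second moment method}: condition the mixture on the high-probability event $\cE=\{\normtwo{Y}^2/n\le 2+\nu\}$, bound $\chisquare{\q^\cE}{\p^n}$ instead (paying only an additive $\q(\cE^C)=\littleO{1}$ in total variation), so that on $\cE$ the offending exponential is capped by $\expon{\gamma^2p^2n(2+\nu)|\langle\theta,\theta'\rangle|}$ and the heavy-overlap regime is then killed by a hypergeometric tail bound. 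Without this (or a genuinely non-Gaussian, $L^2(\p)$-controlled corruption, which is exactly the unresolved difficulty you flag in your point (a)), the proof does not go through.

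Two further points. Your prior $\theta_S=\frac{\gamma}{\sqrt s}\mathbf 1_S$ omits the i.i.d.\ Rademacher signs the paper places on the support; these are what make $\langle\theta,\theta'\rangle$ a centered random walk of hypergeometric length, of typical size $1/\sqrt d$ rather than $s/d=d^{-\delta}$, and the key MGF estimate (Lemma~1 of Cai--Ma--Wu) is for the \emph{signed} sum --- with all-ones the bound degrades for small $\delta$ and small $\gamma$. Also, your ``coupled residuals / forced tradeoff'' narrative is not how the construction works: the paper leaves the $y$-marginal at $\cN{0,1+\gamma^2}$ entirely uncorrupted and corrupts only the conditional law of $X$ given $y$; the two planted signals are decoupled by design, and the $\min$ arises simply because the tester succeeds by detecting whichever signal is cheaper.
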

\noindent The tightness of this bound follows from the results in \cite[Theorem 2.1]{liu20b} which states that any algorithm for robust sparse mean estimation can be used for robust sparse linear regression with a $\text{polylog}(1/\gamma)$ increase in the sample complexity. Hence the agnostic hypothesis selection via tournaments algorithm in \cite{Li2017robust}, which is a statistically optimal algorithm for robust sparse mean estimation works in this case as well.

\subsection{Our Techniques}\label{sec:OurTechniques}
To establish the lower bounds in~\cref{theo:lb:allsparsity}, we combine a range of tools from information theory and the literature on robust estimation. First, we argue that it is enough to consider the standard sparsity case ($q=0$), as this will imply the analogous result for $q\in(0,2)$. Indeed, the definition of effective sparsity will enable us to deduce that any $x\in\R^d$ with $\norm{x}_0=m$ and $\normtwo{x}=\gamma$ belong to the set $\cB_{s,q}$, letting us establish the lower bound given in Theorem~\ref{theo:lb:allsparsity} for general $q$ from the $q=0$ case. In this discussion, we therefore focus on standard sparsity, and  assume that our observations are from a distribution for which the unknown parameter $\theta$ is in the set $\cB_s$. 

To simplify further, we note that in order to establish the desired lower bound it suffices to consider the weaker $\eps$-Huber contamination model instead of the (more general) $\eps$-corruption model \cite{Diakonikolas19recent}. In the the $\eps$-Huber contamination model, $n$ i.i.d.\ samples from a distribution $\p$ are, after contamination, modeled as a set of $n$ i.i.d.\ samples from a distribution $(1-\eps)\p+\eps N$, where $N$ is an arbitrary and unknown probability distribution (that is, the adversary is ``oblivious:'' limited to choosing, ahead of time, a ``bad'' mixture component $N$ to fool the algorithm). We thus can restrict ourselves, for our lower bound, to the setting where the $n$ i.i.d.\ samples are from some distribution in 
\[
\{(1-\eps)\cN{\theta,I_d}+\eps N_{\eps,\theta}:\norm{\theta}_0\le s\}
\]
where we get to design the distributions $N_{\eps,\theta}$.
Let  $\cCTilde_s \eqdef \{(1-\eps)\cN{\theta,I_d}+\eps N_{\eps,\theta}:\norm{\theta}_0\le s, \normtwo{\theta}\ge\gamma\}$ denote the set of $\eps$-Huber contaminated versions of Gaussian distributions whose mean has $\lp[2]$ norm greater than $\gamma$. Our goal can be rephrased as choosing as suitable set of parameters $\Theta$ and a corresponding ensemble of distributions $\{\q_\theta\}_{\theta\in\Theta}\subseteq \cCTilde_s$, and argue that no algorithm can distinguish $n$ samples drawn from a (randomly chosen) element $\q_\theta$ from $n$ samples drawn from $\cN{0,I_d}$.

To do so, define the mixture $\q\eqdef \frac{1}{|\Theta|}\sum_\theta \q^n_\theta$. (equivalently $\q = \bE{\theta\sim\uniform_\Theta}{\q^n_\theta}$). Le Cam's two-point method allows us to reduce the above indistinguishability problem to showing that $\dtv(\q,\cN{0,I_d}^n)=\littleO{1}$, for which, in view of the standard inequality, $\dtv(\q,\p)^2\le\frac{1}{4}\chisquare{\q}{\p}$, it suffices to show that  $\chisquare{\q}{\cN{0,I_d}^n}=\littleO{1}$. 
By the Ingster--Suslina method, this $\chi^2$-divergence between a point and a mixture distribution can then be simplified as 
\begin{equation}\label{eqn:chi2_div}
    1+\chisquare{\q}{\cN{0,I_d}^n} = 1+\chi_{\cN{0,I_d}^n}(\bE{\theta}{\q_\theta^n},\bE{\theta'}{\q_{\theta'}^n}) = \bE{\theta,\theta'}{\left(1+\chi_{\cN{0,I_d}}(\q_\theta,\q_{\theta'})\right)^n},
\end{equation}
where $\theta,\theta'\sim\uniform_\Theta$, and 
$\chi_{\p}(\q,\q') \eqdef \int \frac{\mathrm{d}\q\mathrm{d}\q'}{\mathrm{d}\p} - 1$ is the ``$\chi^2$-correlation'' between $\q$ and $\q'$ with respect to $\p$. Thus, the challenge is to design an ensemble $\{\q_\theta\}$ that yields a good upper bound for the r.h.s. of~\eqref{eqn:chi2_div}, while still being simple enough to analyze.

\noindent Building upon previous works \cite{DKS17,brennan20a_SCgaps}, we define our ensemble $\{\q_\theta\}$ as follows: let
\begin{equation}\label{eqn:Thetadefn}
\Theta \eqdef \left\{\theta\in\left\{-\frac{1}{\sqrt{s}},0,\frac{1}{\sqrt{s}}\right\}^d:\norm{\theta}_0=s\right\}
\end{equation}
and, for $\theta\in\Theta$, $\q_\theta \eqdef (1-\eps)\cN{\gamma\theta,I_d}+\eps\cN{\mu_\theta,I_d}$, where $\mu_\theta$ is chosen such that $(1-\eps)\gamma\theta+\eps\mu_\theta = 0$. Note that indeed, for all $\theta\in\Theta$, $\q_\theta\in\cCTilde_s$ and $\normtwo{\theta}=1$. This choice of $\q_\theta$ combined with the above outline will enable us to prove Theorem~\ref{theo:lb:allsparsity}, by carefully upper bounding $\bE{\theta,\theta'}{\big(1+\chi_{\cN{0,I_d}}(\q_\theta,\q_{\theta'})\big)^n}$ as a function of $n,d,$ and $s$.

One can attempt to prove \cref{theo:robSLR_test_info} (the lower bound for testing in the sparse linear regression model) in a similar vein. By restricting ourselves to the $\eps$-Huber contamination model (which only strengthens the resulting lower bound by constraining the adversary), we get the following set of distributions in the sparse linear regression problem:
\begin{equation}
\{(1-\eps)\cN{0,\Sigma_\theta}+\eps N_{\eps,\theta}:\norm{\theta}_0\le s\},
\end{equation}
where $\Sigma_\theta \eqdef \begin{bmatrix}I_d & \theta\\
\theta^T & 1+\norm{\theta}^2\end{bmatrix}$. Further defining $\cDTilde_s \eqdef \{(1-\eps)\cN{0,\Sigma_\theta}+\eps N_{\eps,\theta}:\norm{\theta}_0\le s, \normtwo{\theta}\ge\gamma\}$, as earlier our problem boils down to finding an ensemble $\{\q_\theta\}\subseteq\cDTilde_s$ such that $\chisquare{\q}{\cN{0,I_{d+1}}^n}=\littleO{1}$ for $\q\eqdef \frac{1}{|\Theta|}\sum_\theta \q^n_\theta$. 

Notice that in the sparse linear regression model, $y_i$ has marginal distribution $\cN{0,1+\norm{\theta}^2}$ and conditioned on $y_i$ the distribution of $X_i$ is given by $\cN{\frac{y_i\theta}{1+\norm{\theta}^2},I-\frac{\theta\theta^T}{1+\norm{\theta}^2}}$. This observation, along with the indistinguishability result established while proving \cref{theo:lb:allsparsity} encourage us to choose $\Theta$ as in~\eqref{eqn:Thetadefn} and  $\q_\theta$ in the following manner:
\begin{align*}
    \q_\theta(y_i) &= \cN{0,1+\gamma^2}\\
    \q_\theta(X_i\mid y_i) &= (1-\eps)\cN{\frac{y_i\gamma\theta}{1+\gamma^2},I-\frac{\theta\theta^T}{1+\gamma^2}}+\eps \cN{\frac{y_i\mu_\theta}{1+\gamma^2},I-\frac{\theta\theta^T}{1+\gamma^2}},
\end{align*}
where $\mu_\theta$ is chosen such that $(1-\eps)\gamma\theta+\eps\mu_\theta = 0$. Again, note that $\q_\theta\in\cDTilde$.
Although this setup looks promising in giving us the right lower bound, it turns out that certain tail events can cause $\chisquare{\q}{\cN{0,I_{d+1}}^n}$ to blow up to infinity. To circumvent this, one of the remedies available in the literature is to use the \emph{conditional second moment method} \cite{reeves19a,WuXu18}. In this method, one carefully conditions out the rare events that preclude us from evaluating the $\chi^2$-divergence. Specifically, we define an event $\cE$ generated by the random variables $(y_1,y_2,\cdots,y_n)$ such that $\q(\cE^C) = \littleO{1}$. We then define $\q^\cE$ as the distribution $\q$ conditioned on the ``good'' event $\cE$. That is, 
\begin{equation}
    \q^\cE(X,Y) = \q(X,Y\mid \cE) = \frac{\bE{\theta}{\q_\theta(X,Y)\indicator{\cE}{Y}}}{\q(\cE)}.
\end{equation}
Note that we have the relation $\q=\q(\cE)\q^\cE+\q(\cE^C)\q^{\cE^C}$. The convexity of total variation distance consequently gives
\begin{align*}
\dtv(\q,\cN{0, I_{d+1}}^n)
&\le \q(\cE)\dtv(\q^\cE,\cN{0, I_{d+1}}^n)+\q(\cE^C)\dtv(\q^{\cE^C},\cN{0, I_{d+1}}^n)\\
&\leq \dtv(\q^\cE,\cN{0, I_{d+1}}^n)+\q(\cE^C)
\end{align*}
Thus, in view of the fact that $\q(\cE^C)=\littleO{1}$, deriving a lower bound reduces to showing that $\dtv(\q^\cE,\cN{0, I_{d+1}}^n)$ is small, for which we saw earlier that it was sufficient (and more convenient) to show that $\chisquare{\q^\cE}{\cN{0, I_{d+1}}^n}=\littleO{1}$. This is the roadmap we will follow -- first, defining a suitable event $\cE$, before showing that $\chisquare{\q^\cE}{\cN{0, I_{d+1}}^n}=\littleO{1}$.\smallskip

The above outlines our approach to proving Theorems~\ref{theo:lb:allsparsity} and~\ref{theo:robSLR_test_info}.  To establish the upper bound in Theorem~\ref{theo:ub:qsparsity}, we show that the ``Agnostic hypothesis selection via Tournaments'' algorithm \cite{DKKJMS16} achieves the stated sample complexity. The argument, in turn, is very similar to the proof of the upper bound for robust sparse mean estimation given in \cite{Li2017robust}. 

\subsection{Related Work}
The study of high-dimensional signals with a sparse underlying structure has enjoyed a significant amount of attention in statistics and signal processing for the past few decades. This line of research has led to the discovery of surprising phenomena such as phase transitions and computational hardness in problems involving sparse signals. The main motivation to study sparse signals is that the sample complexity of statistical tasks involving sparsity is typically significantly smaller than that of dense signals, leading to much more data-efficient algorithms. This yields significant savings whenever the problem is expected to exhibit such a sparse structure, e.g., for physical or biological reasons, or due to a specific design choice when engineering a system. Yet, practical scenarios seldom involve noise-free or perfect signals, which effectively destroys the sparsity of the signal one would have capitalized on. This led to the study of these questions under various noise models, in order to understand if one could still see the same type of sample size savings in these settings.

There is a large body of work on the estimation and detection of signals with a sparse structure under noise (e.g., \cite{DonohoJin04,Baraud02_minimax, Verzelen12_sparseRegression}). Most recently, \cite{Olivier17} gave the tight characterization of minimax rates of estimating linear and quadratic functionals of sparse signals under Gaussian noise. The authors also derived the minimum detection level required for reliably testing the $\ell_2$ norm of sparse signals under Gaussian noise. Another prominent sparse signal model studied in the literature is that of sparse linear regression \cite{DetSparseReg_Ingster10,Verzelen12_sparseRegression,reeves19a}. In the non-asymptotic setting, \cite{carpentier19} established the tight sample complexity of testing in the sparse linear regression model.  However, these line of works focused on \emph{random} noise, and not the more challenging types of noise allowing for \emph{adversarial} corruptions -- what is commonly known as seeking \emph{robust} algorithms.

The systematic study of the robustness of statistical procedures was initiated in the foundational works of Huber \cite{Huber64} and Tukey \cite{Tukey60_samplingContaminated}. Several statistically optimal procedures were found to break down even under slight model misspecification or sample contamination. Although there was substantial progress in the field of robust statistics, surprisingly, \emph{computationally efficient} procedures remained elusive until recently, even for simple tasks such as high-dimensional mean estimation.

In this regard, the past few years witnessed an incredible progress in algorithmic robust statistics. A line of work initiated by \cite{DKKJMS16} and \cite{LRV16} provided computationally efficient optimal robust estimators for various estimation tasks in high dimension \cite{DKKPS19_robSparseEst}. The surprising upshot from these papers is that even with robustness requirements, the sample complexity of many high-dimensional estimation tasks remains essentially the same, at no extra computational cost: i.e., that robustness and computational efficiency are not at odds for those estimation tasks. And still, a subsequent result of~\cite{DKS17} shows that, surprisingly, imposing robustness constraints \emph{does} significantly increase the sample complexity of the Gaussian mean \emph{testing} problem, and makes the sample complexity as large as that of Gaussian mean estimation: that is, for testing, robustness comes at a very high cost, and negates the usual savings that testing allows over estimation. Extending their results on mean testing, \cite{diakonikolas21a_robCov} shows the analogue in the case of Gaussian covariance testing under the Frobenius norm. Yet, those striking results focus on testing \emph{dense} parameters; our work seeks to combine the two lines of work -- inference under sparsity guarantees, and robustness -- to understand if an analogous jump in sample complexity occurs for testing in high-dimensional \emph{sparse} models.

We further note that several works have shown evidence for the existence of statistical-computation gaps in estimation problems with sparse signal structure \cite{Berthet13_sparsePCA,Cai20_sparseMatrix}. \cite{DKS17} gave the first evidence for the presence of an $s$ to $s^2$ statistical-computation gap in robust sparse mean estimation, in the form of a Statistical Query (SQ) lower bound (i.e., for a restricted type of algorithms). This was complemented by \cite{Li2017robust} and \cite{balakrishnan17a}, which gave computationally efficient algorithms for robust sparse estimation achieving $\bigO{s^2}$ sample complexity. Finally, \cite{brennan20a_SCgaps} recently used average-case reductions to prove the algorithmic hardness of robust sparse mean estimation and robust sparse linear regression.

\section{Preliminaries and Notation}
Given a probability distribution $\p$ and integer $n \geq 1$, we denote by $\p^n$ the $n$-fold product distribution with marginals $\p$, and given a set of distributions $\cD$ write $\cD^n = \setOfSuchThat{\p^n}{\p\in\cD}$. For $0< q\le 2$ and $r>0$, we let $\cB_{q,r} = \{\theta\in\R^d:\normlq{\theta}\le r\}$ the $\lp[q]$ ball of radius $r$. We say a vector $\theta\in\R^d$ is \emph{$s$-sparse} if $\norm{\theta}_0\le s$; for $q\in(0,2]$, we will accordingly say that $\theta$ is \emph{$s$-sparse in $\lp[q]$ norm} whenever $\norm{\theta}_q\le s$ (note that $s$ need not be an integer). We denote the uniform distribution over a set $S$ by $\uniform_S$. The delta measure (point mass) at an element $x$ is denoted by $\delta_x$. The total variation distance between two distributions $\nu$ and $\mu$ is defined as:
\begin{equation*}
    \dtv(\nu,\mu) = \frac{1}{2}\int \abs{d\nu-d\mu}\,,
\end{equation*}
which is to be interpreted as $\frac{1}{2}\int \abs{\frac{d\nu}{d\lambda}-\frac{d\mu}{d\lambda}}d\lambda$,
where $\lambda$ is any distribution dominating both $\nu$ and $\mu$; equivalently, 
$
\dtv(\nu,\mu) = \sup_{S} (\nu(S)-\mu(S))
$
where the supremum is over all measurable sets $S$. 
The \emph{$\chi^2$-divergence between $\nu$ and $\mu$} is given by
\begin{equation*}
    \chisquare{\nu}{\mu} = \int\frac{d\nu d\nu}{d\mu}-1\,,
\end{equation*}
and satisfies $\dtv(\nu,\mu) \leq \frac{1}{2}\sqrt{\chisquare{\nu}{\mu}}$. 
Finally, given a reference probability measure $\mu$, the \emph{$\chi^2$-correlation between $\nu$ and $\lambda$ with respect to $\mu$} is defined as
\begin{equation*}
    \chi_\mu(\nu,\lambda) = \int\frac{d\nu d\lambda}{d\mu}-1\;;
\end{equation*}
note that $\chisquare{\nu}{\mu} = \chi_\mu(\nu,\nu)$.\medskip

\noindent We will also rely in several occasions on the following technical lemmas: the first will be useful to bound the $\chi^2$ correlation between Gaussians.
\begin{lemma}\label{lemma:chiSqCorrGaussian}
Let $\phi_{\mu,\Sigma}$ denote the density function of $\cN{\mu,\Sigma}$. Then,
\[
\bE{X\sim \phi_{0,I}}{\frac{\phi_{\mu_1,\Sigma_1}(X)\phi_{\mu_2,\Sigma_2}(X)}{\phi^2_{0,I}(X)}} = \frac{\expon{\frac{1}{2}\left(\mu'^TA^{-1}\mu'-\mu_1^T\Sigma_1\mu_1-\mu_2^T\Sigma_2\mu_2\right)}}{\det(A)^{\frac{1}{2}}\det(\Sigma_1\Sigma_2)^\frac{1}{2}},
\]
where $A \eqdef \Sigma_1^{-1}+\Sigma_2^{-1}-I$ and $\mu' \eqdef \Sigma_1^{-1}\mu_1+\Sigma_2^{-1}\mu_2$.
\end{lemma}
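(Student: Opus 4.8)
The plan is to evaluate the expectation directly as a single Gaussian integral and reduce everything to one completion of the square. First I would absorb the base measure: since $X\sim\phi_{0,I}$, one of the two denominator copies of $\phi_{0,I}$ cancels against the density of the sampling distribution, giving
\[
\bE{X\sim \phi_{0,I}}{\frac{\phi_{\mu_1,\Sigma_1}(X)\phi_{\mu_2,\Sigma_2}(X)}{\phi^2_{0,I}(X)}} = \int_{\R^d}\frac{\phi_{\mu_1,\Sigma_1}(x)\phi_{\mu_2,\Sigma_2}(x)}{\phi_{0,I}(x)}\,dx.
\]
Writing out the three Gaussian densities, the leading $(2\pi)^{-d/2}$ normalizing constants combine to a single factor $(2\pi)^{-d/2}\det(\Sigma_1\Sigma_2)^{-1/2}$, and the remaining exponent is $-\tfrac12 Q(x)$, where $Q(x) = (x-\mu_1)^T\Sigma_1^{-1}(x-\mu_1) + (x-\mu_2)^T\Sigma_2^{-1}(x-\mu_2) - x^Tx$.

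Next I would collect $Q$ into a single quadratic form in $x$. Expanding and grouping the quadratic, linear, and constant terms gives $Q(x) = x^TAx - 2\mu'^Tx + c$, where $A = \Sigma_1^{-1}+\Sigma_2^{-1}-I$ and $\mu' = \Sigma_1^{-1}\mu_1+\Sigma_2^{-1}\mu_2$ are precisely the quantities named in the statement, and $c = \mu_1^T\Sigma_1^{-1}\mu_1 + \mu_2^T\Sigma_2^{-1}\mu_2$ is the $x$-independent remainder. Completing the square yields
\[
Q(x) = (x - A^{-1}\mu')^T A (x - A^{-1}\mu') - \mu'^TA^{-1}\mu' + c,
\]
which isolates an $x$-dependent Gaussian kernel with inverse-covariance $A$ together with a constant factor $\exp\big(\tfrac12(\mu'^TA^{-1}\mu' - c)\big)$ that will form the numerator.

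Finally I would evaluate the remaining integral $\int_{\R^d}\exp\big(-\tfrac12(x-A^{-1}\mu')^TA(x-A^{-1}\mu')\big)\,dx = (2\pi)^{d/2}\det(A)^{-1/2}$. Multiplying by the collected prefactor $(2\pi)^{-d/2}\det(\Sigma_1\Sigma_2)^{-1/2}$ cancels the $(2\pi)^{d/2}$ and produces the claimed denominator $\det(A)^{1/2}\det(\Sigma_1\Sigma_2)^{1/2}$, while the surviving constant factor $\exp\big(\tfrac12(\mu'^TA^{-1}\mu' - \mu_1^T\Sigma_1^{-1}\mu_1 - \mu_2^T\Sigma_2^{-1}\mu_2)\big)$ is exactly the numerator, completing the identity.

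The only substantive point—and hence the main obstacle—is the positive-definiteness of $A = \Sigma_1^{-1}+\Sigma_2^{-1}-I$: the completion of the square and the Gaussian integral are valid only when $A\succ 0$ (equivalently $\Sigma_1^{-1}+\Sigma_2^{-1}\succ I$), which is exactly the condition guaranteeing that the integral converges; otherwise the $\chi^2$-correlation is $+\infty$. I would therefore record this as a standing hypothesis and check that it holds for the covariance matrices arising in our applications. In the mean-testing case $\Sigma_1=\Sigma_2=I$ gives $A=I\succ 0$ trivially, and in the regression case $\Sigma_i = I - \theta\theta^T/(1+\gamma^2)$ (with $\normtwo{\theta}=1$) has eigenvalues in $(0,1]$, so $\Sigma_i^{-1}\succeq I$ and hence $A\succeq I\succ 0$. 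Everything else is routine matrix bookkeeping.
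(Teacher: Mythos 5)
Your proof is correct and takes exactly the paper's route: the paper's own proof simply writes the expectation as the Gaussian integral and says ``completing the square gives the result,'' which is precisely what you carry out, and your added remark that $A\succ 0$ is required for convergence (verified for both covariance structures used later) is a worthwhile precision the paper leaves implicit. One further point in your favor: the constant term you derive, $\mu_1^T\Sigma_1^{-1}\mu_1+\mu_2^T\Sigma_2^{-1}\mu_2$, is the correct one --- the statement's $\mu_1^T\Sigma_1\mu_1+\mu_2^T\Sigma_2\mu_2$ is a typo (missing inverses), as one can confirm from the two places the lemma is invoked, where the exponent $bb'\langle\theta,\theta'\rangle$ and the terms $-b^2\gamma^2y^2+\tfrac{b^2\gamma^4y^2}{1+\gamma^2}$ both come from subtracting $\mu_i^T\Sigma_i^{-1}\mu_i$.
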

\begin{proof}
The l.h.s. can be written explicitly as
\begin{multline*}
\frac{1}{(2\pi)^{d/2}\det(\Sigma_1\Sigma_2)^{1/2}}\int_{\R^d}\frac{\expon{-\frac{1}{2}\left((x-\mu_1)^T\Sigma_1^{-1}(x-\mu_1)+(x-\mu_2)^T\Sigma_2^{-1}(x-\mu_2)\right)}}{\expon{-\frac{1}{2}x^Tx}}dx.
\end{multline*}
Integrating this quantity by completing the square gives the result.
\end{proof}
The second, due to Cai, Ma, and Wu, will let us bound the moment generating function of the square of a random sum of Rademacher random variables, which will arise when we consider sparse priors in our lower bounds.
\begin{lemma}[Lemma 1 in \cite{sparsePCA_Cai15}]\label{lemma:sqRademacherSum}
Fix $d\in\N$ and $s\in[d]$. Let $H\sim\hypergeom(d,s,s)$, and let $\xi_1,\xi_2,\cdots,\xi_s$ be i.i.d.\ Rademacher. Define the random variable $Y$ as
\begin{equation*}
    Y \eqdef \sum_{i=1}^{H}\xi_i.
\end{equation*}
Then there exists a function $\tau\colon(0,\frac{1}{36})\to(1,\infty)$ with $\tau(0^+)$=1, such that for any $0< b<\frac{1}{36}$,
\begin{equation*}
    \expect{\exp\left(\lambda Y^2\right)}\le\tau(b),
\end{equation*}
where $\lambda \eqdef \frac{b}{s}\log\frac{ed}{s}$.
\end{lemma}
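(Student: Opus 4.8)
The plan is to prove the bound by conditioning on the random number of summands $H$, reducing to the moment generating function of a \emph{fixed}-length Rademacher sum, and then averaging the resulting estimate over the hypergeometric law of $H$. Throughout, write $L\eqdef\log\frac{ed}{s}$, so that $\lambda=bL/s$, and recall the elementary fact that $\frac{s}{d}\log\frac{ed}{s}\le 1$ for all $1\le s\le d$ (the map $x\mapsto x(1-\log x)$ is maximized at $x=1$ on $(0,1]$); this is what will ultimately make the final bounds uniform in $d$ and $s$.

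First I would establish the conditional estimate: given $H=h$, the variable $Y$ is a sum of $h$ i.i.d.\ Rademachers, and a Gaussian linearization argument gives, for any $h$ with $2h\lambda<1$,
\begin{equation*}
\expect{e^{\lambda Y^2}\mid H=h} = \bE{Z\sim\cN{0,1}}{(\cosh(\sqrt{2\lambda}\,Z))^{h}} \le \bE{Z}{e^{h\lambda Z^2}} = (1-2h\lambda)^{-1/2},
\end{equation*}
using the identity $e^{\lambda Y^2}=\bE{Z}{e^{\sqrt{2\lambda}ZY}}$, independence of the $\xi_i$, and $\cosh t\le e^{t^2/2}$. This is the exact analogue of the chi-squared moment generating function, and it is valid only below the threshold $h^\ast\eqdef 1/(2\lambda)=s/(2bL)$; above it I would fall back on the crude deterministic bound $Y^2\le h^2$, which yields $\expect{e^{\lambda Y^2}\mid H=h}\le e^{\lambda h^2}$.

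It remains to average over $H\sim\hypergeom(d,s,s)$, for which I would invoke two standard facts about the overlap of two uniformly random $s$-subsets of $[d]$: (i) Hoeffding's comparison, whereby sampling without replacement is dominated in the convex order by sampling with replacement, giving $\expect{e^{\theta H}}\le\exp(\tfrac{s^2}{d}(e^\theta-1))$; and (ii) the tail bound $\Pr[H\ge k]\le\binom{s}{k}(s/d)^k\le(es^2/(kd))^k$. I would then split the expectation at (a constant fraction of) $h^\ast$. On the bulk $\{2h\lambda\le\tfrac12\}$ one has $(1-2h\lambda)^{-1/2}\le e^{2h\lambda}$, so fact~(i) bounds the bulk contribution by $\exp(\tfrac{s^2}{d}(e^{2\lambda}-1))$; since $\tfrac{s^2}{d}(e^{2\lambda}-1)\le 2(e-1)b\cdot\tfrac{s}{d}L\le 2(e-1)b$ in the main regime $2bL\le s$ (and is likewise $\littleO{1}$ as $b\to0$ in the degenerate regime $s\lesssim L$), this factor tends to $1$. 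For the tail $\{h>h^\ast\}$ I would bound each term by $\exp(f(h))$ with $f(h)\eqdef h\log\frac{es^2}{hd}+\lambda h^2$ via fact~(ii) together with the bound $e^{\lambda h^2}$; since $f''(h)=2\lambda-1/h>0$ on $(h^\ast,\infty)$, $f$ is convex there and is controlled by its endpoint values $f(h^\ast)$ and $f(s)$, both of which turn out to be (strongly) negative precisely because $b<1/36$, making the tail sum $\littleO{1}$.

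The main obstacle is exactly this upper tail of $H$. When $L=\log\frac{ed}{s}$ is large the conditional bound $(1-2h\lambda)^{-1/2}$ breaks down well before $h$ reaches its maximal value $s$, and there $e^{\lambda Y^2}$ can be as large as $e^{\lambda s^2}=(ed/s)^{bs}$, an enormous quantity; the argument succeeds only because the hypergeometric tail probability decays faster than this grows over the entire window $[h^\ast,s]$. Verifying this amounts to checking that the competing exponential rates inside $f$ are correctly ordered, which is where the numerical constraint $b<1/36$ enters and what ultimately pins down an explicit $\tau(b)$ with $\tau(0^+)=1$. Collecting the bulk factor $\exp(\tfrac{s^2}{d}(e^{2\lambda}-1))$ with the $\littleO{1}$ tail then yields the claimed bound.
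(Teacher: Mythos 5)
The paper does not actually prove this lemma: it is imported verbatim as Lemma~1 of \cite{sparsePCA_Cai15}, so there is no in-paper argument to compare yours against. Your proposal is nonetheless a correct, self-contained route to the statement, and it follows the same general lines as the original source (condition on $H$, bound the conditional moment generating function of the fixed-length Rademacher sum, then control the hypergeometric averaging); it is also consistent with how the present paper handles the closely related computation in Case~2 of the proof of Theorem~\ref{theo:robSLR_test}, where the same hypergeometric tail bound of Lemma~\ref{lemma:hypergeometric} is turned into a geometric series. Your conditional estimate $\expect{e^{\lambda Y^2}\mid H=h}=\bE{Z}{\cosh(\sqrt{2\lambda}\,Z)^h}\le(1-2h\lambda)^{-1/2}$ is correct, as are the two hypergeometric facts and the bulk bound $\exp\bigl(\tfrac{s^2}{d}(e^{2\lambda}-1)\bigr)\le\exp(2(e-1)b)$ obtained from $\tfrac{s}{d}\log\tfrac{ed}{s}\le 1$.

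The one place where the sketch is genuinely loose is the window $h\in(h^\ast/2,h^\ast]$, which falls outside both of your regimes: the bulk inequality $(1-2h\lambda)^{-1/2}\le e^{2h\lambda}$ is only invoked for $2h\lambda\le 1/2$, while the convexity of $f(h)=h\log\frac{es^2}{hd}+\lambda h^2$ holds only for $h>h^\ast$; on $(0,h^\ast)$ one has $f''(h)=2\lambda-1/h<0$, so $f$ is concave there and an interior maximum would \emph{not} be controlled by endpoint values --- indeed, at any critical point $f(h)=h(1-\lambda h)>0$, which would break the bound. This is fixable rather than fatal: for $h\in[h^\ast/2,h^\ast]$ one has $f'(h)=\log\frac{s^2}{hd}+2\lambda h\le\log\bigl(4b\cdot\tfrac{s}{d}\log\tfrac{ed}{s}\bigr)+1\le\log(4b)+1<0$ when $b<1/36$, so no critical point occurs in that window, $f$ is decreasing there, and the maximum of $f$ over the entire tail $[h^\ast/2,s]$ is attained at $h^\ast/2$ or $s$, both of which you correctly identify as strongly negative. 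With that check added, and with the tail sum closed off by a geometric series $\sum_h e^{-ch}$ rather than ``number of terms times maximum'' (so that the resulting bound is uniform in $d$ and $s$ and depends only on $b$), the argument is complete and yields an explicit $\tau(b)\to 1$ as $b\to 0^+$.
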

Finally, we will require the two facts below on Gaussians and Hypergeometric distributions.
\begin{lemma}\label{lemma:momentsGaussian}
Let $P$ be the density function of $\cN{0,1+\gamma^2}$ for some $0\leq \gamma\leq 1/\sqrt{3}$, and $Q$ be the density function of $\cN{0,1}$. Then
\begin{align*}
    \bE{y\sim Q}{\left(\frac{P(y)}{Q(y)}\right)^2y^{2\ell}} 
    \le \frac{c \cdot 4^\ell \ell^\ell}{\sqrt{1-\gamma^4}} ,
\end{align*}
where $c>0$ is a universal constant.
\end{lemma}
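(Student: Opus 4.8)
The plan is to reduce the stated inequality to an elementary one-dimensional Gaussian moment computation. First I would write the density ratio explicitly. Since $P(y)=\frac{1}{\sqrt{2\pi(1+\gamma^2)}}\expon{-\frac{y^2}{2(1+\gamma^2)}}$ and $Q(y)=\frac{1}{\sqrt{2\pi}}\expon{-\frac{y^2}{2}}$, a direct calculation gives
\[
\left(\frac{P(y)}{Q(y)}\right)^2=\frac{1}{1+\gamma^2}\expon{a y^2},\qquad a\eqdef\frac{\gamma^2}{1+\gamma^2}.
\]
The hypothesis $\gamma\le 1/\sqrt{3}$ enters precisely here: it ensures $\gamma^2\le 1/3$ and hence $a\le 1/4<1/2$, which is exactly the condition under which $\expon{a y^2}$ is dominated by the Gaussian tail of $Q$ and the expectation is finite. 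The quantity to bound thus becomes $\frac{1}{1+\gamma^2}\bE{y\sim\cN{0,1}}{\expon{a y^2}y^{2\ell}}$.

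Next I would evaluate this expectation by completing the square. Setting $b\eqdef\frac12-a$ (so $b\ge 1/4>0$), we have
\[
\bE{y\sim\cN{0,1}}{\expon{a y^2}y^{2\ell}}=\frac{1}{\sqrt{2\pi}}\int_{\R}y^{2\ell}\expon{-b y^2}\,dy=(2\ell-1)!!\,(2b)^{-\ell-\frac12},
\]
using the standard formula for the even moments of a centered Gaussian with variance $1/(2b)$. Since $2b=1-2a=\frac{1-\gamma^2}{1+\gamma^2}$, substituting back and collecting the prefactors via $(1+\gamma^2)(1-\gamma^2)=1-\gamma^4$ yields the exact identity
\[
\bE{y\sim Q}{\left(\frac{P(y)}{Q(y)}\right)^2y^{2\ell}}=\frac{(2\ell-1)!!}{\sqrt{1-\gamma^4}}\left(\frac{1+\gamma^2}{1-\gamma^2}\right)^{\ell}.
\]

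Finally I would bound the two $\ell$-dependent factors crudely. Each of the $\ell$ factors in $(2\ell-1)!!=1\cdot 3\cdots(2\ell-1)$ is at most $2\ell$, so $(2\ell-1)!!\le(2\ell)^\ell=2^\ell\ell^\ell$; and $\gamma^2\le 1/3$ gives $\frac{1+\gamma^2}{1-\gamma^2}\le 2$, hence $\left(\frac{1+\gamma^2}{1-\gamma^2}\right)^\ell\le 2^\ell$. Multiplying these two bounds produces the factor $4^\ell\ell^\ell$, and the claim follows with $c=1$.

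The computation is essentially routine; the only two points deserving care are (i) checking that $\gamma\le 1/\sqrt{3}$ keeps $a$ (equivalently $b$) bounded away from the critical value $1/2$ (respectively $0$) so that the integral converges, and (ii) tracking the $\sqrt{1-\gamma^4}$ prefactor exactly rather than folding it into the constant, since it appears explicitly in the statement. I do not expect any genuine obstacle beyond this bookkeeping.
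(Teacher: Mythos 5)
Your proposal is correct and follows essentially the same route as the paper: compute the density ratio explicitly, complete the square to reduce to a Gaussian moment, and bound the resulting $\ell$-dependent factors using $\gamma^2\le 1/3$. Your version is in fact slightly sharper, since you track the exact double-factorial identity and obtain the constant $c=1$, whereas the paper absorbs the moment bound $\bE{y\sim\cN{0,1}}{y^{2\ell}}\le c(2\ell)^\ell$ into an unspecified universal constant.
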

\begin{proof}
We have
\begin{equation*}
    \left(\frac{P(y)}{Q(y)}\right)^2 = \frac{1}{1+\gamma^2}\expon{\frac{\gamma^2y^2}{1+\gamma^2}}.
\end{equation*}
Therefore,
\begin{align*}
    \bE{y\sim Q}{\left(\frac{P(y)}{Q(y)}\right)^2y^{2\ell}} &= \frac{1}{\sqrt{2\pi}}\frac{1}{1+\gamma^2}\int_{\R}y^{2\ell}\expon{\frac{-y^2}{2}\left(1-\frac{2\gamma^2}{1+\gamma^2}\right)}\\
    &= \frac{1}{\sqrt{1-\gamma^4}}\left(\frac{1+\gamma^2}{1-\gamma^2}\right)^\ell \bE{y\sim Q}{y^{2\ell}}\\
    &\stackrel{\sf(a)}{\le} \frac{c}{\sqrt{1-\gamma^4}}4^\ell \ell^\ell ,
\end{align*}
where in $\sf(a)$ we used the facts that, for all $\ell\geq 0$, $\bE{y\sim \cN{0,1}}{y^{2\ell}}\le c(2\ell)^\ell $ for some absolute constant $c>0$; and that $\gamma\in[0,1/\sqrt{3}]$.
\end{proof}
\begin{lemma}\label{lemma:hypergeometric}
Let $H\sim\hypergeom(d,s,s)$. Then,
\begin{equation*}
    \probaOf{H=h}\le \left(\frac{es^2}{h(d-s+1)}\right)^h.
\end{equation*}
\end{lemma}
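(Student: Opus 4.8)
The plan is to bound the hypergeometric PMF directly from its explicit form. Recall that for $H\sim\hypergeom(d,s,s)$ (a population of size $d$ containing $s$ marked elements, from which $s$ are drawn without replacement), the probability mass function is
\[
\probaOf{H=h} = \frac{\binom{s}{h}\binom{d-s}{s-h}}{\binom{d}{s}}.
\]
I would bound the two pieces $\binom{s}{h}$ and $\binom{d-s}{s-h}/\binom{d}{s}$ separately, each contributing a factor raised to the power $h$, and multiply them at the end.

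The first factor is handled by the textbook inequality $\binom{s}{h}\le (es/h)^h$, which follows from $h!\ge (h/e)^h$. The main work is the ratio $\binom{d-s}{s-h}/\binom{d}{s}$, which I would expand into factorials and regroup as
\[
\frac{\binom{d-s}{s-h}}{\binom{d}{s}} = \frac{s!}{(s-h)!}\cdot\frac{(d-s)!}{(d-2s+h)!}\cdot\frac{(d-s)!}{d!}.
\]
Then I bound each of the three products by counting its terms: $s!/(s-h)!\le s^h$ (a product of $h$ factors, each at most $s$); the product $(d-s)!/(d-2s+h)!$ has $s-h$ factors each at most $d-s$, hence is at most $(d-s)^{s-h}$; and $(d-s)!/d! = 1/\prod_{i=0}^{s-1}(d-i)$ has $s$ denominator factors each at least $d-s+1$, hence is at most $(d-s+1)^{-s}$. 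Combining and using $(d-s)^{s-h}\le(d-s+1)^{s-h}$ collapses the last two factors to $(d-s+1)^{-h}$, yielding $\binom{d-s}{s-h}/\binom{d}{s}\le (s/(d-s+1))^h$.

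Multiplying the two bounds gives exactly
\[
\probaOf{H=h} \le \left(\frac{es}{h}\right)^h\left(\frac{s}{d-s+1}\right)^h = \left(\frac{es^2}{h(d-s+1)}\right)^h,
\]
as claimed. There is no real obstacle here; the only point requiring care is the bookkeeping in the factorial regrouping --- in particular correctly counting the number of factors in each product (so that the exponents come out to $h$, $s-h$, and $s$) and choosing, for each product, whether to over- or under-estimate the individual factors so that all the slack points in the favorable direction.
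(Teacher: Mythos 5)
Your proof is correct and follows essentially the same route as the paper: both bound $\binom{s}{h}\le (es/h)^h$ and $\binom{d-s}{s-h}/\binom{d}{s}\le \bigl(s/(d-s+1)\bigr)^h$ and multiply. The paper simply asserts the second inequality as a ``standard bound on binomial coefficients,'' whereas you supply the factorial bookkeeping justifying it; that bookkeeping checks out.
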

\begin{proof}
This directly follows from bounds on binomial coefficients:
\begin{align*}
    \probaOf{H=h} &= {s\choose h}\frac{{d-s \choose s-h}}{{d \choose s}}
    \le \left(\frac{es}{h}\right)^h\left(\frac{s}{d-s+1}\right)^h
    = \left(\frac{es^2}{h(d-s+1)}\right)^h.
\end{align*}
\end{proof}

\section{Main Results and Proofs}\label{sec:results}
In this section, we formally state and give proofs for the theorems outlined informally in \cref{subsec:contribution}. Recall that a lower bound or the hardness of hypothesis testing between two distributions $\p$ and $\q$ can be characterized by the total variation distance between them. Indeed, by the Pearson-Neyman lemma, if there exists test which successfully distinguishes between two distributions $\p_n$ and $\q_n$ (which in our case will correspond to distributions over $n$ tuples of i.i.d.\ samples) with probability at least $2/3$, then one must have $\dtv(\p_n,\q_n)\ge 1/3$. Hence, to prove indistinguishability for a given $n$, it suffices to show $\dtv(\p_n,\q_n)=\littleO{1}$; since  $\dtv(\p_n,\q_n)^2\le\frac{1}{4}\chisquare{\q_n}{\p_n}$, one can then focus on showing $\chisquare{\q_n}{\p_n}=\littleO{1}$.


In our problems, we formulate $\p_n$ as a product distribution (product of high-dimensional Gaussians) and $\q_n$ as a mixture distribution. In such cases, the following tensorization property of $\chi^2$-divergence helps in upper bounding it: Let $\q\eqdef\int\q_\theta^nd\theta$ be a mixture distribution. Then,
\begin{align*}
    1+\chisquare{\q}{\p^n} &= \int_{\R^n} \frac{\q(x)^2}{\p^n(x)}dx  
    = \int_{\R^n} \frac{\int \q_\theta^n(x) d\theta\: \int \q_{\theta'}^n(x) d\theta'}{\p^n(x)}dx\\
    &= \int_\theta \int_{\theta'} \int_{\R^n}\frac{ \q_\theta^n(x) \q_{\theta'}^n(x)}{\p^n(x)}dx\:d\theta\: d\theta'
    = \int_\theta \int_{\theta'} \left(\int_{\R}\frac{ \q_\theta(x) \q_{\theta'}(x)}{\p(x)}dx\right)^n d\theta\: d\theta'\\
    &= \bE{\theta,\theta'}{\left(1+\chi_{\p}(\q_\theta,\q_{\theta'})\right)^n}.
\end{align*}
This approach is widely known as the Ingster--Suslina method \cite{Ingster2003}, and is the starting point of many minimax lower bounds.

\subsection{Sparse Gaussian Mean Testing}
In this section, we state and prove our results related to sparse Gaussian mean testing. First, we derive the lower bounds in Theorem~\ref{theo:lb:allsparsity} using the techniques outlined in Section~\ref{sec:OurTechniques} and then show a matching upper bound (up to logarithmic factors) for $q>0$ case.

\begin{theorem}\label{prop:robSMT_tight}
Let $\eps, \gamma>0$ be fixed. Let $X_1,X_2,\cdots,X_n$ be i.i.d.\ samples from an unknown distribution $\p$. Moreover, suppose an $\eps$-fraction of these $n$ samples are arbitrarily corrupted. Then, if there exists an algorithm that distinguishes between the cases $\p=\cN{0,I_d}$ and $\p\in\{\cN{\theta,I_d}:\normtwo{\theta}\ge\gamma,\norm{\theta}_0\le s\}$ with probability greater than $2/3$, we must have $n=\Omega\left(s\log\frac{ed}{s}\right)$.
\end{theorem}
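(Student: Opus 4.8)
The plan is to apply Le Cam's two-point method through the Ingster--Suslina identity against the weaker (hence lower-bound-preserving) $\eps$-Huber adversary. I would fix the prior $\uniform_\Theta$ with $\Theta$ as in~\eqref{eqn:Thetadefn}, so that every $\theta\in\Theta$ has $\normtwo{\theta}=1$ and $\norm{\theta}_0=s$, and set $\q_\theta\eqdef(1-\eps)\cN{\gamma\theta,I_d}+\eps\cN{\mu_\theta,I_d}$ with $\mu_\theta\eqdef-\frac{(1-\eps)\gamma}{\eps}\theta$, so the mixture has mean exactly $0$ and $\q_\theta\in\cCTilde_s$ (the true signal $\gamma\theta$ has $\lp[2]$-norm $\gamma$ and is $s$-sparse). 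Since any tester accepting each $\q_\theta^n$ with probability $\ge 2/3$ also accepts the mixture $\q\eqdef\bE{\theta\sim\uniform_\Theta}{\q_\theta^n}$ with that probability, it suffices to show $\dtv(\q,\cN{0,I_d}^n)<1/3$; I would bound this via $\dtv\le\frac12\sqrt{\chisquare{\q}{\cN{0,I_d}^n}}$ and control the $\chi^2$-divergence through the tensorization identity~\eqref{eqn:chi2_div}.

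The crux is the per-coordinate $\chi^2$-correlation. Expanding both mixtures and invoking \cref{lemma:chiSqCorrGaussian} with $\Sigma_1=\Sigma_2=I$ (which gives the clean identity $\int \phi_{\mu_1,I}\phi_{\mu_2,I}/\phi_{0,I}=\expon{\ip{\mu_1}{\mu_2}}$), and writing $\mu_b=\alpha_b\theta$ with $\alpha_0=\gamma$, $\alpha_1=-\frac{(1-\eps)\gamma}{\eps}$, I obtain
\[
1+\chi_{\cN{0,I_d}}(\q_\theta,\q_{\theta'})=\bE{A,A'}{\expon{A A'\ip{\theta}{\theta'}}},
\]
where $A,A'$ are i.i.d., equal to $\gamma$ with probability $1-\eps$ and to $-\frac{(1-\eps)\gamma}{\eps}$ with probability $\eps$. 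The mean-matching choice of $\mu_\theta$ is precisely what forces $\bE{A,A'}{A}=0$, so the term linear in $\rho\eqdef\ip{\theta}{\theta'}$ vanishes and the correlation is of second order in $\rho$. Because $A,A'$ are bounded and $\abs{\rho}\le 1$, a second-order Taylor expansion of the smooth map $\rho\mapsto\bE{A,A'}{\expon{A A'\rho}}$ on $[-1,1]$ yields a constant $C=C(\eps,\gamma)$ with $1+\chi_{\cN{0,I_d}}(\q_\theta,\q_{\theta'})\le 1+C\rho^2\le\expon{C\rho^2}$.

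Next I would translate the prior geometry into the random variable of \cref{lemma:sqRademacherSum}. For $\theta,\theta'\sim\uniform_\Theta$, the overlap $H$ of their supports is distributed as $\hypergeom(d,s,s)$, and conditioned on this overlap the products of the corresponding signs are i.i.d.\ Rademacher, so $s\rho=\sum_{i=1}^H\xi_i=:Y$. Combining with~\eqref{eqn:chi2_div},
\[
1+\chisquare{\q}{\cN{0,I_d}^n}=\bE{\theta,\theta'}{\left(1+\chi_{\cN{0,I_d}}(\q_\theta,\q_{\theta'})\right)^n}\le\bE{Y}{\expon{\frac{nC}{s^2}Y^2}}.
\]
Applying \cref{lemma:sqRademacherSum} with $\lambda=nC/s^2$ bounds the right-hand side by $\tau(b)$ whenever $b\eqdef nC/\bigl(s\log\frac{ed}{s}\bigr)<\frac{1}{36}$. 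Since $\tau(0^+)=1$, I can fix a constant $b_0\in(0,\frac{1}{36})$ with $\tau(b_0)<1+\frac49$; as $\bE{Y}{\expon{\lambda Y^2}}$ is nondecreasing in $\lambda$, every $n\le\frac{b_0}{C}\,s\log\frac{ed}{s}$ gives $\chisquare{\q}{\cN{0,I_d}^n}<\frac49$, hence $\dtv<1/3$ and no tester succeeds. As $\eps,\gamma$ are fixed, $C$ and $b_0$ are constants, yielding $n=\bigOmega{s\log\frac{ed}{s}}$.

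I expect the main obstacle to be the second step: controlling the \emph{full} $\chi^2$-correlation rather than only its leading order. The vanishing of the first-order term --- the very signature of why robustness is costly here --- hinges delicately on the mean-matching construction, and one must verify that the residual is genuinely $O(\rho^2)$ \emph{uniformly} over all $\rho\in[-1,1]$, not merely near $0$, so that the bound feeds cleanly into \cref{lemma:sqRademacherSum}. Tracking the constants $C$ and $b_0$ carefully enough to stay inside that lemma's admissible window $b<\frac{1}{36}$ while still beating the $\frac49$ threshold is the most delicate part of the argument.
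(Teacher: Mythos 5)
Your proposal is correct and follows essentially the same route as the paper: reduction to the $\eps$-Huber model, the same mean-matched mixture prior over $\Theta$, the Ingster--Suslina tensorization, the identification of $s\langle\theta,\theta'\rangle$ with a Rademacher sum stopped at a $\hypergeom(d,s,s)$ time, and the appeal to \cref{lemma:sqRademacherSum}. The only (minor) divergence is in bounding the $\chi^2$-correlation: you use a soft second-order Taylor/convexity argument giving an unspecified constant $C(\eps,\gamma)$, whereas the paper invokes \cref{lemma:chi2} (due to \cite{DKS17}) to obtain the explicit bound $\exp\bigl(\gamma^4\langle\theta,\theta'\rangle^2/\eps^4\bigr)$, which records the $\eps^4/\gamma^4$ dependence explicitly; since $\eps$ and $\gamma$ are fixed in the theorem, both yield the same $\Omega\bigl(s\log\frac{ed}{s}\bigr)$ conclusion.
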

\noindent To prove this theorem, we will require the following lemma due to Diakonikolas, Kane, and Stewart~\cite{DKS17}; we provide below an alternative proof, which we believe is simpler than the original.
\begin{lemma}[{\cite[Lemma~6.9]{DKS17}}]\label{lemma:chi2}
Fix $\theta,\theta'\in\R^d$, $\eps\in(0,1/3]$, and let $\q_\theta$ be defined as
\begin{equation*}
    \q_\theta \eqdef (1-\eps)\mathcal{N}(\theta,I_d)+\eps\mathcal{N}\left(-\frac{(1-\eps)}{\eps}\theta,I_d\right).
\end{equation*}
Then,
\begin{equation*}
    1+|\chi_{\cN{0,I_d}}(\q_\theta,\q_{\theta'})| \le \exp\mleft(\frac{\langle\theta,\theta'\rangle^2}{\eps^4}\mright).
\end{equation*}
\end{lemma}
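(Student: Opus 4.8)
The plan is to reduce the computation of the $\chi^2$-correlation to a one-dimensional moment generating function estimate. First I would write
$$1 + \chi_{\cN{0,I_d}}(\q_\theta,\q_{\theta'}) = \bE{X\sim\cN{0,I_d}}{\frac{\q_\theta(X)\q_{\theta'}(X)}{\phi_{0,I}^2(X)}},$$
and expand each mixture $\q_\theta = (1-\eps)\phi_{\theta,I}+\eps\phi_{-\rho\theta,I}$ (and likewise $\q_{\theta'}$), where $\rho \eqdef \frac{1-\eps}{\eps}$, into its two Gaussian components. Distributing the product yields four pairwise terms, each of the form $w_i w'_j \bE{X\sim\cN{0,I_d}}{\phi_{\nu_i,I}(X)\phi_{\nu'_j,I}(X)/\phi_{0,I}^2(X)}$ with means $\nu_i\in\{\theta,-\rho\theta\}$ and $\nu'_j\in\{\theta',-\rho\theta'\}$. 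Applying \cref{lemma:chiSqCorrGaussian} with $\Sigma_1=\Sigma_2=I$ (so that $A=I$ and $\mu'=\nu_i+\nu'_j$) collapses each term to $\exp(\langle\nu_i,\nu'_j\rangle)$, the familiar correlation between two unit-covariance Gaussians.

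The key observation, carried out next, is that every mean is a scalar multiple of $\theta$ or $\theta'$, so each inner product is a multiple of $a \eqdef \langle\theta,\theta'\rangle$ and the whole expression depends on $\theta,\theta'$ only through $a$. Concretely, I would rewrite
$$1 + \chi_{\cN{0,I_d}}(\q_\theta,\q_{\theta'}) = \bE{}{\exp(UV a)},$$
where $U,V$ are independent, each taking the value $1$ with probability $1-\eps$ and $-\rho$ with probability $\eps$; here the four weights $w_iw'_j$ and the four exponents $\langle\nu_i,\nu'_j\rangle\in\{a,-\rho a,-\rho a,\rho^2 a\}$ match exactly the law of the product $UV$. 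The decisive feature of this two-point law is that it is centered: $\bE{}{U} = (1-\eps)\cdot 1 + \eps\cdot(-\rho) = 0$, by the choice of the second mean in $\q_\theta$.

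From this representation the two halves of the bound fall out. For the lower bound $\chi\ge 0$ (which lets me drop the absolute value), Jensen's inequality gives $\bE{}{e^{UVa}} \ge e^{a\bE{}{UV}} = e^{a\bE{}{U}\bE{}{V}} = 1$. For the upper bound I would condition on $U$ and estimate the inner MGF $\bE{}{e^{(Ua)V}}$ via Hoeffding's lemma for the centered, bounded variable $V\in[-\rho,1]$, obtaining $\bE{}{e^{tV}}\le \exp(t^2(1+\rho)^2/8)$; plugging $t=Ua$ and using $U^2\le\rho^2$ (valid since $\rho\ge 1$ for $\eps\le 1/2$) yields $1+\chi \le \exp(\rho^2(1+\rho)^2 a^2/8)$. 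The proof then closes by the two identities $1+\rho = 1/\eps$ and $\rho\le 1/\eps$, so that $\rho^2(1+\rho)^2 \le 1/\eps^4$ and hence $1+\abs{\chi} = 1+\chi \le \exp(a^2/(8\eps^4)) \le \exp(a^2/\eps^4)$.

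The only genuinely delicate point is the first one: recognizing that the four-term sum of Gaussian correlations is exactly the MGF of a product of two independent centered two-point variables, and that the mean-cancellation $(1-\eps)\theta + \eps\mu_\theta = 0$ built into $\q_\theta$ is precisely what forces $\bE{}{U}=0$. Once this moment structure is exposed, the remaining estimates are routine applications of Jensen and Hoeffding, and the slack in the final step (a factor $8$, plus the discarded $(1-\eps)^2$) indicates that the constant $1/\eps^4$ is comfortable rather than tight.
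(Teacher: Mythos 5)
Your proof is correct, and its core is the same as the paper's: both represent $\q_\theta$ as a mixture $\bE{b}{\cN{b\theta,I_d}}$ over a centered two-point variable $b\in\{1,-\tfrac{1-\eps}{\eps}\}$, apply \cref{lemma:chiSqCorrGaussian} with identity covariances to collapse the Gaussian correlations to $\exp(bb'\langle\theta,\theta'\rangle)$, and thereby reduce the lemma to bounding the scalar quantity $\bE{b,b'}{\exp(bb'\langle\theta,\theta'\rangle)}$. The only divergence is in how that last expectation is estimated. The paper expands the exponential as a power series, uses $\bE{}{b}=0$ to kill the linear term and the bound $\abs{\bE{b}{b^\ell}}\le 2p^{\ell-1}$ termwise, and resums to get $\frac{4}{p^2}\bigl(e^{p^2\abs{a}}-p^2\abs{a}-1\bigr)\le e^{a^2/\eps^4}-1$; this keeps the absolute value on $\chi$ throughout. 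You instead note via Jensen that $\bE{}{e^{UVa}}\ge 1$ (so $\chi\ge 0$ and the absolute value can be dropped outright), and then apply Hoeffding's lemma conditionally on $U$, using $1+\rho=1/\eps$ and $U^2\le\rho^2$. Both routes land on the same $\exp(a^2/\eps^4)$ with room to spare; yours is arguably cleaner and makes the nonnegativity of the $\chi^2$-correlation explicit, while the paper's series manipulation is self-contained and avoids invoking Hoeffding. All the individual steps you outline (the four-term expansion matching the law of $UV$, the range $V\in[-\rho,1]$, the constants) check out.
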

\begin{proof}
We here provide a simple proof. Let $p\eqdef\frac{1-\eps}{\eps}$, which is at least $2$ since $\eps \leq 1/3$. The distribution $\q_\theta$ can be written as follows:
\begin{equation*}
    \q_\theta = \bE{b}{\cN{b\theta,I_d}},
\end{equation*}
where $b\sim(1-\eps)\delta_1+\eps\delta_{-p}$. Let $\phi_{\mu,\Sigma}$ denote the density function of $\cN{\mu,\Sigma}$. Then,
\begin{align*}
    \chi_{\cN{0,I_d}}(\q_\theta,\q_{\theta'}) &= \bE{X\sim\cN{0,I_d}}{\frac{\bE{b}{\phi_{b\theta,I_d}(X)}\bE{b'}{\phi_{b'\theta',I_d}(X)}}{\phi_{0,I_d}^2(X)}}-1 \\
    &= \bE{bb'}{\bE{X\sim\cN{0,I_d}}{\frac{\phi_{b\theta,I_d}(X)\phi_{b'\theta',I_d}(X)}{\phi_{0,I_d}^2(X)}}}-1.
\end{align*}
By \cref{lemma:chiSqCorrGaussian}, we have
\begin{equation*}
    \bE{X\sim\cN{0,I_d}}{\frac{\phi_{b\theta,I_d}(X)\phi_{b'\theta',I_d}(X)}{\phi_{0,I_d}^2(X)}} = \expon{bb'\iptheta}\,,
\end{equation*}
from which
\begin{align*}
    \chi_{\cN{0,I_d}}(\q_\theta,\q_{\theta'}) &= \bE{bb'}{\expon{bb'\iptheta}}-1
    = \sum_{\ell=1}^{\infty}\frac{\bE{b}{b^\ell}^2\iptheta^\ell}{\ell!}.
\end{align*}
Note that $\bE{}{b}=0$, and that  $\abs{\bE{b}{b^\ell}}=\abs{(1-\eps)+(-1)^\ell p^\ell\eps} = (1-\eps)\abs{1+(-1)^\ell p^{\ell-1}}\le 2p^{\ell-1}$. Thus, we have
\begin{align*}
    |\chi_{\cN{0,I_d}}(\q_\theta,\q_{\theta'})| &\le  \frac{4}{p^2}\sum_{\ell=2}^{\infty}\frac{p^{2\ell}|\iptheta|^\ell}{\ell!}\\
    &= \frac{4}{p^2}\left(\expon{p^2|\iptheta|}-p^2|\iptheta|-1\right)\\
    &\stackrel{\sf(a)}\le \frac{4}{p^2}\left(\expon{p^4\iptheta^2}-1\right)\\
    &\stackrel{\sf(b)}\le \expon{\frac{\iptheta^2}{\eps^4}}-1,
\end{align*}
where \textsf{(a)} is due to the fact that $e^x-x\leq e^{x^2}$ for all $x\ge 0$ and \textsf{(b)} follows from $p\geq 2$.
\end{proof}

\noindent With this in hand, we are now able to establish~\cref{prop:robSMT_tight}.
\begin{proof}[Proof of Theorem~\ref{prop:robSMT_tight}]
For the purpose of deriving a lower bound, it is enough to consider the weaker $\eps$-Huber model for the corruption of samples, as then a lower bound on the sample complexity of the hypothesis testing problem in this setting will also be a lower bound for robust sparse Gaussian mean testing in the adversarial one.
\begin{align*}
    \mathcal{H}_0 &: X_i\sim \mathcal{N}(0,I_d)\\
    \mathcal{H}_1 &: X_i\sim (1-\eps)\mathcal{N}(\theta,I_d)+\eps N_{\eps,\theta}\quad \text{s. t.}\quad \norm{\theta}_2\ge\gamma \text{ and }\norm{\theta}_0\le s,
\end{align*}
for $1\le i\le n$. Here the distributions $N_{\eps,\theta}$ need to be chosen appropriately.

Let $B=\{\beta\in\{-1,0,1\}^d:\norm{\beta}_0=s\}$ and $\beta\sim\uniform_B$. We can think of $\beta$ as being generated according to the following process: First pick an element uniformly from the set $\{b\in\{0,1\}^d:\norm{b}_0=s\}$ and then set the non-zero elements to be i.i.d.\ Rademacher random variables.\smallskip

Define $\Theta \eqdef \frac{1}{\sqrt{s}}B$ and $\theta\eqdef \frac{1}{\sqrt{s}}\beta$, so that $\normtwo{\theta}=1$. Define $\q_\theta$ as 
\begin{equation*}
    \q_\theta = (1-\eps)\mathcal{N}(\gamma\theta,I_d)+\eps\mathcal{N}\left(-\frac{(1-\eps)}{\eps}\gamma\theta,I_d\right).
\end{equation*}
It is immediate to see that $\q_\theta\in\cH_1$ for all $\theta\in\Theta$. Let $\q \eqdef \bE{\theta\sim\uniform_\Theta}{\q_\theta^n}$. Then by (\ref{eqn:chi2_div}),
\begin{align*}
    1+\chisquare{\q}{\cN{0,I_d}^n} = \bE{\theta,\theta'}{\left(1+\chi_{\cN{0,I_d}}(\q_\theta,\q_{\theta'})\right)^n}.
\end{align*}
By Lemma~\ref{lemma:chi2}, we get
\begin{align*}
    1+\chi_{\cN{0,I_d}}(\q_\theta,\q_{\theta'}) &\le \exp\mleft(\frac{\gamma^4\langle\theta,\theta'\rangle^2}{\eps^4}\mright)
    = \exp\mleft(\frac{\gamma^4\langle \beta,\beta'\rangle^2}{\eps^4 s^2}\mright).
\end{align*}
Hence,
\begin{align}
    1+\chisquare{\q}{\cN{0,I_d}^n} 
    &\le \bE{\beta,\beta'}{\exp\mleft(\frac{n\gamma^4\langle \beta,\beta'\rangle^2}{\eps^4 s^2}\mright)}. \label{eq:ub:eq:betabeta'}
\end{align}    
Now by symmetry of the problem, it suffices to evaluate the expectation for any fixed $\beta'$, say to $\beta_0\eqdef (\underbrace{1,1,\cdots,1}_{s},0,0,\cdots,0)$. We can characterize $\langle \beta,\beta_0\rangle$ as follows: let $H$ be a random variable with distribution Hypergeometric$(d,s,s)$ and let $\xi_1,\xi_2,\cdots,\xi_s$ be i.i.d. Rademacher random variables independent of $H$. Then, we have $Y\eqdef \langle \beta,\beta_0\rangle=\sum_{i=1}^{H}\xi_i$, where $Y$ can be thought of as a symmetric random walk with Hypergeometric stopping time. We can then rewrite~\eqref{eq:ub:eq:betabeta'} as
\begin{align*}
    1+\chisquare{\q}{\cN{0,I_d}^n} \le \expect{\exp\mleft(\frac{n\gamma^4Y^2}{\eps^4 s^2}\mright)}.
\end{align*}
By Lemma~\ref{lemma:sqRademacherSum}, if $n \le c\frac{\eps^4}{\gamma^4}\cdot s\log\frac{ed}{s}$ for a sufficiently small $c>0$ (e.g., $c<1/36$ suffices),
\begin{equation*}
    \chisquare{\q}{\cN{0,I_d}^n} \le \expect{\exp\mleft(\frac{n\gamma^4Y^2}{\eps^4 s^2}\mright)}-1 \le \tau(c)-1.
\end{equation*}
Since $\tau$ is continuous at $0^+$, we can choose $c>0$ to make $\tau(c)-1$ arbitrarily small. This implies that it is impossible to distinguish between $\cH_0$ and $\cH_1$ with high probability if $n=o\mleft(s\log\frac{ed}{s}\mright)$, establishing the theorem.
\end{proof}
\noindent Next, we extend this lower bound to other sparsity notions, namely, with respect to the $\ell_q$-norms.
\begin{theorem}\label{prop:rob_lqSMT}
Let $\eps, \gamma>0$ be fixed, and $q\in(0,2)$. Let $X_1,X_2,\cdots,X_n$ be i.i.d.\ samples from an unknown distribution $\p$. Moreover, suppose an $\eps$-fraction of these $n$ samples are arbitrarily corrupted. Then, if there exists an algorithm that distinguishes between the cases $\p=\cN{0,I_d}$ and $\p\in\{\cN{\theta,I_d}:\normtwo{\theta}\ge\gamma,\norm{\theta}_q\le s\}$ with probability greater than $2/3$, we must have  $n=\Omega\left(m\log\frac{ed}{m}\right)$, where $m$ is the effective sparsity defined as:
\begin{equation*}
    m \eqdef \max\{u\in[d]:\gamma^2u^{\frac{2}{q}-1}\le s^2\}.
\end{equation*}
\end{theorem}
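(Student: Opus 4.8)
The plan is to reduce the $\ell_q$-sparsity case to the standard sparsity lower bound already established in Theorem~\ref{prop:robSMT_tight}. The key observation is the definition of the effective sparsity $m$: by construction, $m$ is the largest integer $u$ for which a vector supported on $u$ coordinates, each of magnitude $\gamma/\sqrt{u}$ (so that its $\ell_2$-norm equals $\gamma$), still lies in the $\ell_q$-ball $\cB_{s,q}$. Concretely, if $\theta\in\R^d$ has $\norm{\theta}_0=m$ and $\normtwo{\theta}=\gamma$, then choosing all nonzero entries equal to $\pm\gamma/\sqrt{m}$ gives $\norm{\theta}_q^q = m\cdot(\gamma/\sqrt{m})^q = \gamma^q m^{1-q/2}$, hence $\norm{\theta}_q = \gamma\, m^{1/q-1/2} = (\gamma^2 m^{2/q-1})^{1/2} \le s$ precisely by the defining inequality $\gamma^2 m^{2/q-1}\le s^2$ of the effective sparsity. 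Thus every such $\theta$ belongs to $\cB_{s,q}$.

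The strategy, therefore, is to instantiate exactly the same hard ensemble as in the proof of Theorem~\ref{prop:robSMT_tight}, but with the sparsity level set to $m$ rather than $s$. First I would define $\Theta \eqdef \frac{\gamma}{\sqrt{m}}\{\beta\in\{-1,0,1\}^d:\norm{\beta}_0=m\}$, so that every $\theta\in\Theta$ is $m$-sparse with $\normtwo{\theta}=\gamma$ and, by the computation above, satisfies $\norm{\theta}_q\le s$; hence the associated Huber-contaminated distributions $\q_\theta = (1-\eps)\cN{\theta,I_d}+\eps\cN{-\frac{1-\eps}{\eps}\theta,I_d}$ all lie in the alternative hypothesis class $\{\cN{\theta,I_d}:\normtwo{\theta}\ge\gamma,\norm{\theta}_q\le s\}$ after contamination. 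From this point the argument is verbatim that of Theorem~\ref{prop:robSMT_tight}: apply the Ingster--Suslina identity~\eqref{eqn:chi2_div}, bound the $\chi^2$-correlation via Lemma~\ref{lemma:chi2} to obtain the factor $\exp(\gamma^4\langle\beta,\beta'\rangle^2/(\eps^4 m^2))$, reduce to a fixed $\beta_0$ by symmetry, recognize $\langle\beta,\beta_0\rangle$ as a Rademacher sum with $\hypergeom(d,m,m)$ stopping time, and invoke Lemma~\ref{lemma:sqRademacherSum} to conclude that $\chisquare{\q}{\cN{0,I_d}^n}=\littleO{1}$ whenever $n\le c\,\frac{\eps^4}{\gamma^4}\,m\log\frac{ed}{m}$ for a sufficiently small constant $c$.

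I do not anticipate a serious technical obstacle here, since the heavy lifting---the $\chi^2$ bound and the moment generating function estimate for the square of the Rademacher random walk---has already been carried out for the $q=0$ case and transfers unchanged with $s$ replaced by $m$. The one point requiring genuine care, and the only place where the $\ell_q$ structure enters, is verifying the containment $\Theta\subseteq\cB_{s,q}$; this hinges entirely on the defining inequality of $m$ and must be checked cleanly, including the edge behavior where $m$ is taken to be the \emph{maximal} such integer so that the ensemble is as rich as possible (a smaller support would yield a weaker bound). One should also confirm that $m\ge 1$ in the regime of interest (guaranteed whenever $\gamma\le s$, which holds under the standing assumption $\gamma\ge\eps$ together with the trivial requirement that a nontrivial signal be admissible), so that the ensemble is nonempty. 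With that containment in hand, the lower bound $n=\Omega(m\log\frac{ed}{m})$ follows immediately from the $q=0$ result applied at sparsity $m$.
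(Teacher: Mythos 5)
Your proposal is correct and follows essentially the same route as the paper: reduce to the $q=0$ hard instance at sparsity level $m$, after verifying that the flat $m$-sparse vectors of $\ell_2$-norm $\gamma$ lie in $\cB_{s,q}$ by the defining inequality $\gamma^2 m^{2/q-1}\le s^2$ (the paper checks this containment for \emph{all} $m$-sparse vectors of norm $\gamma$ via a convexity inequality, whereas you verify it by direct computation for the specific ensemble, which is all that is needed). The remainder — Ingster--Suslina, Lemma~\ref{lemma:chi2}, and Lemma~\ref{lemma:sqRademacherSum} with $s$ replaced by $m$ — transfers exactly as you describe.
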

\begin{proof}
The parameter set of the alternative hypothesis in this problem is
\begin{equation*}
    \Theta = \{\theta\in\R^d: \norm{\theta}_q\le s \text{ and }\norm{\theta}_2\ge\gamma\}.
\end{equation*}
To conclude, it suffices to show that every $m$-sparse (in $\ell_0$ sense) vector with $\ell_2$ norm equal to $\gamma$ belongs to $\Theta$, and then appeal to the proof of  Theorem~\ref{prop:robSMT_tight} (whose hard instances had mean with magnitude exactly $\gamma$). To do so, let $x\in\R^d$ be $m$-sparse such that $\norm{x}_2 = \gamma$. Then,
\begin{equation*}
    \norm{x}_q^2 = \left(\sum_{i=1}^d|x_i|^q\right)^\frac{2}{q}\stackrel{\rm (convexity)}{\le} m^{\frac{2}{q}-1}\norm{x}_2^2=m^{\frac{2}{q}-1}\gamma^2\stackrel{\sf(a)}\le s^2,
\end{equation*}
where $\sf(a)$ is due to the definition of $m$, and the convexity step follows from $\frac{q}{2} \leq 1$. Hence $x\in\Theta$. Thus the lower bound in Theorem~\ref{prop:robSMT_tight} holds here with $m$ replacing $s$.
\end{proof}
We will now proceed to show that the lower bounds in Theorems  \ref{prop:robSMT_tight} and \ref{prop:rob_lqSMT} are tight up to a logarithmic factor. First, we note that the lower bound given in Theorem~\ref{prop:robSMT_tight} is optimal due to the result that the sample complexity of robust sparse Gaussian mean estimation is upper bounded by the same value \cite{Diakonikolas19recent,Li2017robust}, and the folklore fact that testing is no harder than estimation. So we need to prove the upper bound only for the cases where $q>0$. We will show that the sample complexity lower bound given in Theorem~\ref{prop:rob_lqSMT} is tight by proving that the algorithm for robust sparse Gaussian mean estimation given in \cite{Li2017robust} works in the $\ell_q$-norm constrained case as well (and so, again, the upper bound for testing will follow from the upper bound for estimation), in the regime $\gamma = \Theta(\eps)$. Intuitively, this is because all but $m$ coordinates of a vector $\theta$ would be small if $\norm{\theta}_q\le s$ and $\normtwo{\theta} = \gamma$. Our proof is similar to the proof of~\cite[Fact A.1]{Li2017robust} with a minor modification.

\begin{theorem}\label{prop:UB_lq_sme}
Let $\eps, \delta>0$ and, $q\in(0,2)$ be fixed. Let $X_1,X_2,\cdots,X_n$ be i.i.d.\ samples from an unknown distribution $\cN{\mu,I_d}$, where $\norm{\mu}_q\le s$. Moreover, suppose an $\eps$-fraction of these $n$ samples are arbitrarily corrupted. Then, there exists an algorithm that, for $n = \bigO{\frac{1}{\eps^2}\mleft(m\log\frac{d}{\eps}+\log\frac{1}{\delta}\mright)}$, upon being given the $X_1,\dots,X_n$ outputs a $\mu'$ such that $\normtwo{\mu'-\mu}\le\bigO{\eps}$ with probability $1-\delta$. Here $m$ is the effective sparsity defined as:
\begin{equation*}
    m = \max\{u\in[d]: \eps^2u^{\frac{2}{q}-1}\le s^2\}.
\end{equation*}
\end{theorem}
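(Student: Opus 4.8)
The plan is to reduce the $\ell_q$-constrained estimation problem to robust estimation of an \emph{approximately} $m$-sparse mean, and then invoke the agnostic tournament-based (hypothesis selection) estimator of \cite{DKKJMS16,Li2017robust} essentially as a black box. The one genuinely new ingredient---the ``minor modification'' over \cite[Fact A.1]{Li2017robust}---is a truncation argument showing that the effective sparsity $m$ captures all but an $\bigO{\eps}$ fraction of the $\ell_2$ mass of $\mu$. I would first establish this structural claim: letting $\nu$ be the vector obtained from $\mu$ by zeroing out all but its $m$ largest-magnitude coordinates, we have $\normtwo{\mu-\nu}=\bigO{\eps}$. Ordering the coordinates as $|\mu_{(1)}|\ge|\mu_{(2)}|\ge\cdots$, the bound $k\,|\mu_{(k)}|^q\le\norm{\mu}_q^q\le s^q$ gives $|\mu_{(k)}|\le s\,k^{-1/q}$; since $q<2$, the tail obeys $\sum_{k>m}|\mu_{(k)}|^2\le s^2\sum_{k>m}k^{-2/q}\le \frac{q}{2-q}\,s^2 m^{-(2/q-1)}$. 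The definition of $m$ as the largest $u$ with $\eps^2 u^{2/q-1}\le s^2$ then forces (via maximality applied at $m+1$) $s^2 m^{-(2/q-1)}=\bigO{\eps^2}$, whence $\normtwo{\mu-\nu}=\bigO{\eps}$. We may assume $s\ge\eps$ (so $m\ge 1$), since otherwise $\normtwo{\mu}\le\norm{\mu}_q\le s<\eps$ and returning $0$ already suffices.

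Next I would translate this into the distributional picture. Writing $P$ for the $\eps$-corrupted distribution actually observed, the corruption model gives $\dtv(P,\cN{\mu,I_d})\le\eps$, while the structural claim together with the standard estimate $\dtv(\cN{\mu,I_d},\cN{\nu,I_d})=\bigTheta{\min(1,\normtwo{\mu-\nu})}$ gives $\dtv(\cN{\mu,I_d},\cN{\nu,I_d})=\bigO{\eps}$, where $\nu$ is exactly $m$-sparse. Hence $P$ lies within $\bigO{\eps}$ total variation of the class $\mathcal{G}_m=\{\cN{\eta,I_d}:\norm{\eta}_0\le m\}$. I would then build a finite net $\mathcal{C}$ of $\mathcal{G}_m$ by enumerating the $\binom{d}{m}$ supports and placing an $\eps$-net on each $m$-dimensional block, so that some $\cN{\eta^\ast,I_d}\in\mathcal{C}$ satisfies $\normtwo{\eta^\ast-\nu}\le\eps$ and therefore $\dtv(\cN{\eta^\ast,I_d},P)=\bigO{\eps}$. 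Counting yields $\log|\mathcal{C}|=\bigO{m\log\frac{ed}{m}+m\log\frac1\eps}=\bigO{m\log\frac{d}{\eps}}$; here the per-support resolution can be taken to contribute only a $\log\frac1\eps$ factor per coordinate (by localizing the mean and using the translation-equivariance of robust one-dimensional estimation), exactly as in \cite{Li2017robust}.

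Finally I would run the agnostic hypothesis-selection (``tournament'') procedure of \cite{DKKJMS16,Li2017robust} over $\mathcal{C}$. Its guarantee is that, from $n=\bigO{(\log|\mathcal{C}|+\log\frac1\delta)/\eps^2}=\bigO{\frac1{\eps^2}\mleft(m\log\frac d\eps+\log\frac1\delta\mright)}$ samples, it returns some $\cN{\mu',I_d}\in\mathcal{C}$ with $\dtv(\cN{\mu',I_d},P)\le C\cdot\min_{H\in\mathcal{C}}\dtv(H,P)+\bigO{\eps}=\bigO{\eps}$ with probability $1-\delta$. Combining with $\dtv(P,\cN{\mu,I_d})\le\eps$ and the triangle inequality gives $\dtv(\cN{\mu',I_d},\cN{\mu,I_d})=\bigO{\eps}$, and inverting the Gaussian total-variation estimate yields $\normtwo{\mu'-\mu}=\bigO{\eps}$, as required.

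I expect the main subtlety to lie not in the tournament itself (which is genuinely off-the-shelf) but in the bookkeeping that makes the reduction \emph{agnostic}: because $\mu$ is only approximately $m$-sparse, one must lean on the robustness of hypothesis selection to a model that is $\bigO{\eps}$-misspecified, and carefully verify that the four sources of error---adversarial corruption, truncation, net resolution, and the tournament's additive slack---each remain $\bigO{\eps}$ so that their sum is still $\bigO{\eps}$. The truncation bound and its calibration against the precise definition of the effective sparsity $m$ is the one place where the argument departs from the $q=0$ analysis, and is therefore where I would concentrate the care.
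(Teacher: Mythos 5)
Your proposal is correct and follows essentially the same route as the paper's proof: truncate $\mu$ to its $m$ largest coordinates and show the tail has $\ell_2$ mass $O(\eps)$ via the definition of the effective sparsity, discretize the $m$-sparse candidates on an $\eps/\sqrt{d}$-scale grid (localized by a crude first-stage estimate, which the paper makes explicit via the naive pruning step of \cite{DKKJMS16}), and then run the agnostic tournament of \cite{Li2017robust} over a candidate set of log-cardinality $O(m\log\frac{d}{\eps})$. The only cosmetic difference is in the truncation bound, where you sum the order-statistics decay $|\mu_{(k)}|\le s k^{-1/q}$ while the paper uses the H\"older-type estimate $\sum_{j\in J^C}\mu_j^2\le\max_{j\in J^C}|\mu_j|^{2-q}\cdot\norm{\mu}_q^q$; both yield the same $O(\eps^2)$ bound for fixed $q$.
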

\noindent We will use the following lemma, originally stated in~\cite{DKKJMS16}, to prove \cref{prop:UB_lq_sme}.
\begin{lemma}[{\cite[Lemma A.5]{Li2017robust}}]\label{lemma:tournaments}
Let $\cC$ be a class of probability distributions. Suppose that for some fixed $N,\eps, \gamma,\delta >0$ there exists an algorithm that given $N$ independent samples from some $D\in\cC$, of which up to an $\eps$-fraction is corrupted, returns a list of $M$ distributions such that, with probability $1-\delta/3$, there exists a $D'$ in the list with $\dtv(D',D)<\gamma$. Suppose furthermore that with probability $1-\delta/3$, the distributions returned by this algorithm are all in some fixed set $\cM$. Then there exists another algorithm, which given $\bigO{N+\frac{1}{\eps^2}\mleft(\log(|\cM|)+\log\frac{1}{\delta}\mright)}$ samples from $D$, an $\eps$-fraction of which have been arbitrarily corrupted, returns a single distribution $D'$ so that with $1-\delta$ probability $\dtv(D',D)<\bigO{\gamma+\eps}$. 
\end{lemma}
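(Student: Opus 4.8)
The plan is to prove the theorem by reducing it, via the ``Agnostic hypothesis selection via Tournaments'' machinery of \cref{lemma:tournaments} used as a black box, to the purely combinatorial task of exhibiting a \emph{small} fixed family of candidate Gaussians, one of which is $\bigO{\eps}$-close to the truth. Concretely, take $\cC\eqdef\{\cN{\nu,I_d}:\norm{\nu}_q\le s\}$, so that the unknown $D\eqdef\cN{\mu,I_d}$ lies in $\cC$. I will construct a fixed finite set $\cM$ of Gaussians with $\log|\cM|=\bigO{m\log\frac{d}{\eps}}$ such that $\dtv(D,\cN{\nu',I_d})=\bigO{\eps}$ for some $\cN{\nu',I_d}\in\cM$. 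The ``list-producing'' algorithm required by \cref{lemma:tournaments} is then the trivial one that reads $N=0$ samples and simply outputs all of $\cM$: its returned list is contained in $\cM$ by construction and contains an $\bigO{\eps}$-good candidate \emph{deterministically}, so both hypotheses of the lemma hold with probability $1$. Invoking \cref{lemma:tournaments} with $\gamma=\bigO{\eps}$ yields a single $\cN{\mu',I_d}$ with $\dtv(\cN{\mu',I_d},\cN{\mu,I_d})=\bigO{\gamma+\eps}=\bigO{\eps}$ from $\bigO{\frac1{\eps^2}(\log|\cM|+\log\frac1\delta)}$ corrupted samples; since for unit-covariance Gaussians $\dtv(\cN{a,I_d},\cN{b,I_d})=\Theta(\normtwo{a-b})$ whenever $\normtwo{a-b}\le 1$, this gives $\normtwo{\mu'-\mu}=\bigO{\eps}$, as claimed.

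The crux is to show that, although $\mu$ need not be $\ell_0$-sparse, it is $\bigO{\eps}$-close in $\ell_2$ to a genuinely $m$-sparse vector of norm at most $s$; this is exactly where the effective sparsity $m$ enters. Sort the coordinates so that $|\mu_{(1)}|\ge|\mu_{(2)}|\ge\cdots$, and let $\bar\mu$ be the restriction of $\mu$ to its $m$ largest coordinates, so $\bar\mu$ is $m$-sparse and $\normtwo{\bar\mu}\le\normtwo{\mu}\le\norm{\mu}_q\le s$ (using $q\le 2$). From $\norm{\mu}_q\le s$ we get $k\,|\mu_{(k)}|^q\le\norm{\mu}_q^q\le s^q$, i.e.\ $|\mu_{(k)}|\le s\,k^{-1/q}$; since $q<2$ the tail sum converges, giving
\begin{equation*}
\normtwo{\mu-\bar\mu}^2=\sum_{k>m}|\mu_{(k)}|^2\le s^2\sum_{k>m}k^{-2/q}\le\frac{s^2}{2/q-1}\,m^{1-2/q}.
\end{equation*}
By maximality of $m$ we have $\eps^2(m+1)^{2/q-1}>s^2$, whence $s^2 m^{1-2/q}\le\eps^2\mleft(\tfrac{m+1}{m}\mright)^{2/q-1}=\bigO{\eps^2}$, the hidden constant depending only on the fixed $q$. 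Hence $\normtwo{\mu-\bar\mu}=\bigO{\eps}$ and so $\dtv(\cN{\mu,I_d},\cN{\bar\mu,I_d})=\bigO{\eps}$.

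It remains to build $\cM$ and bound its size. I take $\cM\eqdef\{\cN{\nu,I_d}:\nu\in\Lambda_\eps\}$, where $\Lambda_\eps$ is a $c\eps$-net in $\ell_2$ of the set of $m$-sparse vectors of $\ell_2$-norm at most $s$, for a small constant $c$. Choosing the support in ${d\choose m}\le(ed/m)^m$ ways and then a net of the $m$-dimensional radius-$s$ ball in $(1+2s/(c\eps))^m$ ways gives $\log|\cM|=\bigO{m\log\frac{d}{m}+m\log\frac{s}{\eps}}$. Maximality of $m$ again gives $s/\eps<(m+1)^{1/q-1/2}\le(2m)^{1/q}$, so $\log\frac{s}{\eps}=\bigO{\log m}=\bigO{\log d}$, and altogether $\log|\cM|=\bigO{m\log d}=\bigO{m\log\frac{d}{\eps}}$. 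By the triangle inequality $\cM$ contains a point within $\bigO{\eps}$ in TV of $\cN{\bar\mu,I_d}$, hence within $\bigO{\eps}$ of $\cN{\mu,I_d}$, which is precisely what the reduction in the first paragraph requires.

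The substantive step is the truncation bound and its marriage to the net count through the effective-sparsity identity $s/\eps=\mathrm{poly}(m)$; everything else is bookkeeping, and the whole argument parallels~\cite[Fact~A.1]{Li2017robust} with the $\ell_q$-truncation as the only new ingredient. The one point requiring care is the boundary case $m=d$ (where $\bar\mu=\mu$ and the $(m{+}1)$-maximality bound is unavailable): there the support is fixed and $\log|\cM|=\bigO{d\log\frac{s}{\eps}}$, which matches the target provided $s/\eps$ is polynomial in $d$, the only place a mild regularity assumption on $s$ is implicitly used.
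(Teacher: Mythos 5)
Your proposal does not prove the statement at hand. The statement is \cref{lemma:tournaments} itself --- the generic ``agnostic hypothesis selection via tournaments'' result, which converts \emph{any} list-producing algorithm over an arbitrary class $\cC$ into one that outputs a \emph{single} hypothesis $\bigO{\gamma+\eps}$-close to $D$, at an additional cost of $\bigO{\frac{1}{\eps^2}(\log|\cM|+\log\frac{1}{\delta})}$ corrupted samples. What you have written instead is a proof of \cref{prop:UB_lq_sme} (the robust $\ell_q$-sparse mean estimation upper bound), and it explicitly \emph{invokes \cref{lemma:tournaments} as a black box}. As an argument for the lemma this is circular: you assume precisely the statement you were asked to establish. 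Nothing in your write-up addresses the lemma's actual content --- there is no selection procedure, no analysis of how a single winner is extracted from the candidate set, and no accounting for where the $\frac{1}{\eps^2}(\log|\cM|+\log\frac{1}{\delta})$ sample bound or the degradation from $\gamma$ to $\bigO{\gamma+\eps}$ comes from. Note also that your construction is specific to spherical Gaussians with $\ell_q$-sparse means, whereas the lemma quantifies over arbitrary distribution classes $\cC$ and arbitrary fixed sets $\cM$.

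For the record, the paper itself does not prove this lemma either: it is imported verbatim from \cite[Lemma A.5]{Li2017robust} (building on \cite{DKKJMS16}), so there is no in-paper proof to compare against; but a correct self-contained proof would run along classical Scheff\'e-tournament lines, which is what is missing entirely from your proposal. For each pair of candidates $D_i,D_j$, one considers the Scheff\'e set $A_{ij}=\{x: D_i(x)>D_j(x)\}$ and the empirical mass $\hat{p}_{ij}$ of $A_{ij}$ among the samples; Hoeffding's inequality plus a union bound over the at most $|\cM|^2$ pairs shows that $\bigO{\frac{1}{\eps^2}(\log|\cM|+\log\frac{1}{\delta})}$ uncorrupted samples make every $\hat{p}_{ij}$ accurate to within $\bigO{\eps}$, and corrupting an $\eps$-fraction of the samples moves each $\hat{p}_{ij}$ by at most $\eps$ more. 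Declaring $D_i$ to beat $D_j$ when $|\hat{p}_{ij}-D_i(A_{ij})|$ is below a suitable threshold, one then argues that any listed candidate within $\gamma$ of $D$ survives the tournament, while any surviving candidate must be $\bigO{\gamma+\eps}$-close to $D$ in total variation; combining with the $N$ samples fed to the list-producing algorithm gives the claimed complexity. Incidentally, your proposal would be a reasonable sketch for \cref{prop:UB_lq_sme}, and is close in spirit to the paper's proof of that theorem (which likewise truncates $\mu$ to its top-$m$ coordinates and builds a discretized candidate family, though via rounding to multiples of $\eps/\sqrt{d}$ after a pruning step rather than via an $\eps$-net); but that is a different statement from the one you were asked to prove.
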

\begin{proof}[Proof of Theorem~\ref{prop:UB_lq_sme}]
Let $\mathcal{M}_A$ be the set of distributions defined as follows:
\begin{equation*}
    \mathcal{M}_A = \left\{\cN{\mu',I_d}:\norm{\mu'}_0\le m, \norm{\mu'-\mu}_2\le A \text{ and, for all } i,  \mu'_i = k_i\frac{\eps}{\sqrt{d}} \text{ for some } k_i \in \Z \right\}.
\end{equation*}
First, we will show that there exists a $\cN{\mu',I_d}\in\mathcal{M}_A$ such that $\normtwo{\mu'-\mu}\le O(\eps)$. 
Let $J\subseteq[d]$ be the set of indices of $m$ coordinates of $\mu$ with largest magnitude. Define $\mu'$ as a vector with coordinates $\mu'_j = \frac{\eps}{\sqrt{d}}\left[\frac{\sqrt{d}\mu_j}{\eps}\right]\mathds{1}_J$, where $[\cdot]$ is the rounding operator. 
Then we have
\begin{align}\label{eqn:UB_mean_ub}
    \normtwo{\mu'-\mu}^2 &= \sum_{j\in J}(\mu_j-\mu'_j)^2 + \sum_{j\in J^C}\mu_j^2
    \le \frac{\eps^2m}{d} + \sum_{j\in J^C}\mu_j^2.
\end{align}
The first term of the r.h.s.\ is at most $\eps^2$; as for the second, it can be upper bounded as follows:
\begin{align*}
    \sum_{j\in J^C}\mu_j^2 &= \sum_{j\in J^C}|\mu_j|^{2-q}|\mu_j|^q
    \le \max_{u\in J^C}|\mu_u|^{2-q}\sum_{j\in J^C}|\mu_j|^q \\
    &\stackrel{\sf(a)}\le \frac{\norm{\mu}_q^{2-q}\norm{\mu}_q^q}{(m+1)^{\frac{2-q}{q}}}
    = \frac{\norm{\mu}_q^{2}}{(m+1)^{2/q-1}} \\
    &\le s^2(m+1)^{1-2/q}
    \stackrel{\sf(b)}\le \eps^2,
\end{align*}
where $\sf(a)$ uses the fact that $\max_{u\in J^C}|\mu_u|$ is the $(m+1)$\textsuperscript{th} largest coordinate of $\mu$ and $\sf(b)$ is by the definition of $m$. Hence, we get $\normtwo{\mu'-\mu}\le \sqrt{2}\eps$ (and, in particular, $\mu' \in \cM_A$ for every $A\geq \sqrt{2}\eps$).

It is easy to see that, for all $A>0$, $|\cM_A|\le {d\choose m}(A\sqrt{d}/\eps)^m$: there are ${d\choose m}$ different ways to select the non-zero coordinates, and for each chosen set of non-zero coordinates there are at most $(A\sqrt{d}/\eps)^m$ elements in $\cM_A$.\\
Now, consider the following algorithm: First, use a naive pruning algorithm with $N$ samples as input, to output an approximation of $\mu$, denoted by $\mu_0$. Such a pruning is given and analyzed in~\cite{DKKJMS16}, which outputs $\mu_0$ such that $\normtwo{\mu_0-\mu}\le B\eqdef \bigO{\sqrt{d\log({N}/{\delta})}}$ with probability at least $1-\delta$. Next, round each coordinate of $\mu_0$ to its nearest integer multiple of $\frac{\eps}{\sqrt{d}}$, and output the set of distributions
\begin{equation*}
\cM' = \left\{\cN{\mu'',I}:\norm{\mu''}_0\le m, \norm{\mu''-\mu_0}_2\le B \text{ and, for all } i,  \mu''_i = k_i\frac{\eps}{\sqrt{d}} \text{ for some } k_i \in \Z \right\}.   
\end{equation*}
By the triangle inequality, with probability at least $1-\delta$ we have $\cM'\subseteq\cM_{2B}$. Hence, by Lemma~\ref{lemma:tournaments} there exists an algorithm that, with probability $1-\delta$, outputs a $\mu'$ such that $\normtwo{\mu'-\mu}\le\bigO{\eps}$ and the number of samples required is 
\begin{align*}
    \bigO{N+\frac{\log\left|\cM_{2B}\right|+\log\frac{1}{\delta}}{\eps^2}} &= \bigO{\frac{\log{d\choose m}+m\log\frac{d}{\eps}+\log\frac{1}{\delta}}{\eps^2}}
    = \bigO{\frac{m\log\frac{ed}{m}+m\log\frac{d}{\eps}+\log\frac{1}{\delta}}{\eps^2}}.
\end{align*}
This proves Theorem~\ref{prop:UB_lq_sme}.
\end{proof}


\subsection{Testing in Sparse Linear Regression Model}
In this section, we state the formal version of \cref{theo:robSLR_test_info}, and provide its proof.
\begin{theorem}\label{theo:robSLR_test}
Let $\gamma>0$ be sufficiently small, $\eps = \frac{\gamma}{C}$ for a sufficiently large $C$, and $s=d^{1-\delta}$ for some $\delta\in(0,1)$. Let $(X_1,y_1),(X_2,y_2),\cdots,(X_n,y_n)$ be i.i.d.\ samples obtained from the sparse linear regression model described in~\eqref{SLR_model}. Moreover, suppose an $\eps$-fraction of these $n$ samples are arbitrarily corrupted. Then, if there exists an algorithm that distinguishes between the cases $\normtwo{\theta}=0$ and $\normtwo{\theta}\ge\gamma$ with probability greater than $2/3$, we must have \smash{$n=\bigOmega{\min\mleft(s\log d,\frac{1}{\gamma^4}\mright)}$}.
\end{theorem}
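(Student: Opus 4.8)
The plan is to follow the conditional second moment roadmap sketched in \cref{sec:OurTechniques}. First I would pass to the weaker $\eps$-Huber model and fix the prior $\Theta$ from \eqref{eqn:Thetadefn} together with the ensemble $\q_\theta$ defined there, so that each $\q_\theta\in\cDTilde_s$ has signal $\gamma\theta$ of norm exactly $\gamma$ and the marginal law of every $y_i$ is $\cN{0,1+\gamma^2}$, independent of $\theta$. Since every $\theta\in\Theta$ has $\normtwo{\theta}=1$, the conditional covariance simplifies: $I-\frac{\theta\theta^T}{1+\gamma^2}$ has inverse $I+\frac{\theta\theta^T}{\gamma^2}$, a fact I will use repeatedly. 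Because the $y_i$-marginal is the same Gaussian $\cN{0,1+\gamma^2}$ under every $\q_\theta$, the natural good event is $\cE\eqdef\{\max_i\abs{y_i}\le\tau\}$ with $\tau=\Theta(\sqrt{\log n})$; a union bound with a Gaussian tail estimate gives $\q(\cE^C)\le 2n\,\expon{-\tau^2/(2(1+\gamma^2))}=\littleO{1}$ for a suitable constant in $\tau$. As the excerpt records, it then suffices to bound $\chisquare{\q^\cE}{\cN{0,I_{d+1}}^n}$, and since $\cE$ factorizes over the $y_i$ while the $X_i$ can be integrated out conditionally on $y_i$, the Ingster--Suslina identity reduces the task to controlling $\frac{1}{\q(\cE)^2}\bE{\theta,\theta'}{F(\theta,\theta')^n}$, where
\[
F(\theta,\theta')\eqdef \int_{\abs{y}\le\tau}\frac{P(y)^2}{Q(y)}\Big(1+\chi_{\cN{0,I_d}}\big(\q_\theta(\cdot\mid y),\q_{\theta'}(\cdot\mid y)\big)\Big)\,dy,
\]
with $P=\phi_{0,1+\gamma^2}$ and $Q=\phi_{0,1}$ as in \cref{lemma:momentsGaussian}.

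To evaluate $F$, I would write each conditional mixture $\q_\theta(\cdot\mid y)$ in the $b$-mixture form $\bE{b}{\cN{\frac{by\gamma\theta}{1+\gamma^2},\,I-\theta\theta^T/(1+\gamma^2)}}$ with $b\sim(1-\eps)\delta_1+\eps\delta_{-p}$, $p=\frac{1-\eps}{\eps}$, exactly as in \cref{lemma:chi2}, and apply \cref{lemma:chiSqCorrGaussian} to integrate out $X$ for each pair $(b,b')$. Using the inverse-covariance identity and $\normtwo{\theta}=\normtwo{\theta'}=1$, the Gaussian integral collapses to a closed form depending on $\theta,\theta'$ only through $u\eqdef\langle\theta,\theta'\rangle=\langle\beta,\beta'\rangle/s$: a normalization $\frac{1+\gamma^2}{\sqrt D}$ with $D=(1+\gamma^2)^2-u^2$, times $\expon{C_{b,b'}y^2}$ for an explicit coefficient $C_{b,b'}$ proportional to $\gamma^2 u\big(2(1+\gamma^2)bb'-u(b^2+b'^2)\big)$. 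The remaining $y$-integral against $\frac{P(y)^2}{Q(y)}$ is then a truncated Gaussian integral whose $y^2$-exponent equals $\frac{\gamma^2-1}{2(1+\gamma^2)}+C_{b,b'}$.

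The two terms in the final bound now emerge transparently. On the diagonal $u=0$ one has $D=(1+\gamma^2)^2$, $C_{b,b'}=0$, so $F=\int_{\abs{y}\le\tau}\frac{P^2}{Q}\le 1+\chisquare{P}{Q}=(1-\gamma^4)^{-1/2}=1+\bigO{\gamma^4}$ (the $\ell=0$ case of \cref{lemma:momentsGaussian}); since $u$ concentrates near $0$ under the prior, this yields a baseline $F^n\approx\expon{\bigO{n\gamma^4}}$, which is $\bigO{1}$ precisely when $n=\bigO{1/\gamma^4}$. The off-diagonal contribution, driven by $u=\langle\beta,\beta'\rangle/s$, is handled exactly as in the proof of \cref{prop:robSMT_tight}: fixing $\beta'$ to a reference vector $\beta_0$, writing $Y=\langle\beta,\beta_0\rangle=\sum_{i=1}^H\xi_i$ with $H\sim\hypergeom(d,s,s)$, bounding the surviving $\expon{\Theta(nu^2)}$-type factors, and invoking \cref{lemma:sqRademacherSum} (with \cref{lemma:hypergeometric} to discard the rare very-large-$Y$ events) to show this part stays $1+\littleO{1}$ whenever $n=\littleO{s\log(ed/s)}=\littleO{s\log d}$, the last equality using $s=d^{1-\delta}$.

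The main obstacle is the tail blow-up flagged in the excerpt. For the rare pair $b=b'=-p$ (probability $\eps^2$) the coefficient $C_{-p,-p}\approx C^2u$ (using $\eps=\gamma/C$, so $p^2\gamma^2\approx C^2$) turns the $y^2$-exponent positive, making the truncated integral as large as $\expon{\Theta(C^2u\,\tau^2)}$ rather than a convergent Gaussian mass; this is exactly why the \emph{un}conditioned $\chi^2$ diverges. The delicate point is to choose $\tau$ and exploit $\eps=\gamma/C$, $s=d^{1-\delta}$ so that these $\eps^2\,\expon{\Theta(C^2u\tau^2)}$ contributions stay summable once averaged against the hypergeometric law of $u$ — the exponential cost of a large correlation $u$ must be outweighed by its exponentially small prior probability from \cref{lemma:hypergeometric}. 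Balancing the truncation level $\tau=\Theta(\sqrt{\log n})$ (needed for $\q(\cE^C)=\littleO{1}$) against this blow-up, simultaneously in the regimes $n\lesssim 1/\gamma^4$ and $n\lesssim s\log d$, is the crux of the argument and the step I expect to require the most care.
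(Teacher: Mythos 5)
Your setup — the Huber reduction, the prior $\Theta$ and ensemble $\q_\theta$, the Ingster--Suslina identity, the conditional second moment method, and the identification of the two regimes giving $n\gamma^4$ and the hypergeometric-controlled correlation term — matches the paper's proof. The genuine gap is in your choice of conditioning event, which is precisely the step you defer as ``the crux.'' You condition on $\cE=\{\max_i\abs{y_i}\le\tau\}$ with $\tau=\Theta(\sqrt{\log n})$; the paper instead conditions on the \emph{aggregate} $\frac{1}{n}\normtwo{Y}^2\le 2+\nu$ with $\nu=\frac{\log\log d}{n}$. This is not a cosmetic difference. The dangerous $b=b'=-p$ term contributes, per coordinate, a factor of order $\eps^2\int_{\abs{y}\le\tau}\frac{P^2}{Q}\expon{\gamma^2p^2 u y^2}\,dy$ with $\gamma^2p^2\approx C^2$ and $u=\langle\theta,\theta'\rangle$. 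Once $C^2u>1/2$ this truncated integral is governed by its endpoint and is of order $\eps^2\expon{(C^2u-1/2)\tau^2}=\eps^2 n^{\Theta(C^2u)}$, so your $F(\theta,\theta')^n$ picks up a factor $\expon{\eps^2 n^{\Theta(C^2u)}}$. Against the hypergeometric prior, the budget available at overlap $u=h/s$ is only $\expon{-\Theta(h\delta\log d)}=\expon{-\Theta(us\,\delta\log d)}$, and for any fixed $u\gtrsim 1/C^2$ the cost $\eps^2 n^{\Theta(C^2u)}=\eps^2\,d^{\Theta(C^2u(1-\delta))}$ exceeds $us\,\delta\log d\approx u\,d^{1-\delta}\log d$ by a polynomial factor in $d$ once $C$ is large and $n$ is anywhere near $s\log d$. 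No admissible $\tau$ repairs this: the union bound forces $\tau^2\ge 2\log n$, and already at $u=1$ (the diagonal $\theta=\theta'$, prior mass $1/\binom{d}{s}$) the resulting bound only tolerates $n\lesssim d^{(1-\delta)/\Theta(C^2)}$, far short of $s\log d$.

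The paper's event fixes exactly this. The exponent that actually matters is $\gamma^2\sum_i b_ib_i'y_i^2\langle\theta,\theta'\rangle\le \gamma^2p^2\normtwo{Y}^2\abs{\langle\theta,\theta'\rangle}$, and conditioning on $\normtwo{Y}^2\le n(2+\nu)$ caps it at $\expon{2C^2 n\abs{\langle\theta,\theta'\rangle}(1+\littleO{1})}$ --- linear in $n$ with constant $2$, rather than the $n\tau^2=\Theta(n\log n)$ (or worse, the $n^{\Theta(C^2u)}$ per-coordinate blow-up) that a max-based event permits. With $\lambda=\frac{2C^2n(2+\nu)}{s}$ this is summable against $\probaOf{H=h}$ exactly when $n=\littleO{s\log d}$, via \cref{lemma:hypergeometric}. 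The price of the non-factorizing event is that one can no longer write everything as a single per-coordinate integral $F^n$; the paper therefore splits into two cases on $\abs{\langle\theta,\theta'\rangle}$ relative to $\tau'=\frac{\eps^2}{4e\gamma^2\log d}$: below the threshold the indicator is dropped entirely (the moment series converges, via \cref{lemma:momentsGaussian}, yielding the $\expon{n\gamma^4}\cdot\expon{\Theta(n\gamma^4\langle\theta,\theta'\rangle^2/\eps^2)}$ bound handled by \cref{lemma:sqRademacherSum}); above it the indicator is used as described. I would recommend restructuring your argument around the second-moment event and this two-case split rather than trying to tune $\tau$ in the max-based event.
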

\begin{proof}
 Given $(X_1,y_1),(X_2,y_2),\cdots,(X_n,y_n)$ as in the statement, let $X\eqdef (X_1,X_2,\cdots,X_n)$ and $Y\eqdef (y_1,y_2,\cdots,y_n)$. As noted earlier, while deriving a lower bound, it is enough to consider the weaker $\eps$-Huber model for the corruption of samples. We define a hypothesis testing task compliant with the $\eps$-Huber contamination model: a lower bound on the sample complexity of this hypothesis testing task will then constitute a lower bound for the robust testing in sparse linear regression model. Let $\Theta\eqdef \{\theta\in\{-\frac{1}{\sqrt{s}},0,\frac{1}{\sqrt{s}}\}^d : \norm{\theta}_0 = s\}$ and $p\eqdef \frac{1-\eps}{\eps}$. Further, denote the probability distribution under the hypotheses $\cH_0$ and $\cH_1$ by $\p$ and $\q$, respectively. We define $\cH_0$ and $\cH_1$ as follows:
\begin{align}\label{eqn:SLRT_hypotheses}
\begin{split}
    \mathcal{H}_0 :&\quad (X_i,y_i)\overset{\text{i.i.d.}}{\sim} \cN{0,I_{d+1}}\\
    \mathcal{H}_1 :&\quad\theta\sim \uniform_\Theta\\
    &\quad y_i\overset{\text{i.i.d.}}{\sim} \cN{0,1+\gamma^2}\\ 
    &\quad\q(X_i\mid y_i,\theta) = \q_\theta(X_i\mid y_i) =  (1-\eps)\cN{\frac{y_i\gamma\theta}{1+\gamma^2},I-\frac{\gamma^2\theta\theta^T}{1+\gamma^2}}\\
    &\hspace{17em}+\eps \cN{-\frac{(1-\eps)}{\eps}\frac{y_i\gamma\theta}{(1+\gamma^2)},I-\frac{\gamma^2\theta\theta^T}{1+\gamma^2}}
\end{split}
\end{align}
Note that we can rewrite
\[
\q_\theta(X_i\mid y_i) = \bE{b}{\cN{\frac{by_i\gamma\theta}{1+\gamma^2},I-\frac{\gamma^2\theta\theta^T}{1+\gamma^2}}}, \quad\text{where }b\sim (1-\eps)\delta_1+\eps\delta_{-p}\;;
\]
in particular, $\q_\theta$ is an $\eps$-Huber contaminated version of $\cN{0,\Sigma_{\gamma\theta}}$, where $\Sigma_{\gamma\theta} = \begin{bmatrix}I_d & \gamma\theta\\
\gamma\theta^T & 1+\gamma^2\end{bmatrix}$ and thus a valid distribution to consider in this problem. A natural approach to prove the impossibility of detection between $\cH_0$ and $\cH_1$ would be to try and show that $\chisquare{\q}{\p} = \littleO{1}$. However, as previously discussed, this method does not yield a useful lower bound for this problem, due to low-probability events which cause $\chisquare{\q}{\p}$ to blow up. 

To circumvent this issue, we take recourse in an alternative approach, the \emph{conditional second moment method}~\cite{reeves19a,WuXu18}. In this method, we first define a high-probability event $\cE$ and then evaluate $\chisquare{\q^\cE}{\p}$, where $\q^\cE$ is the distribution $\q$ conditioned on the event $\cE$. The idea is that, by this conditioning, we can rule out the rare events that cause $\chisquare{\q}{\p}$ to go to infinity. Indeed, suppose the event $\cE$ is chosen so that $\q(\cE)=o(1)$, and that we are able to show that $\chisquare{\q^\cE}{\p}=\littleO{1}$. This latter statement implies that $\dtv(\q^\cE,\p)=\littleO{1}$, and so, from the relation $\q = \q(\cE)\q^\cE+\q(\cE^C)\q^{\cE^C}$ and the convexity of total variation distance, we have
\[
    \dtv(\q,\p)\le \q(\cE)\dtv(\q^\cE,\p)+\q(\cE^C)\dtv(\q^{\cE^C},\p)
    \leq \dtv(\q^\cE,\p)+\q(\cE^C) = o(1)\,,
\]
which proves the impossibility result.

\noindent We now implement this roadmap. For $\nu>0$, we define the event $\cE$ as follows: 
\begin{equation}\label{eqn:condEvent}
    \cE = \left\{Y\in\R^d : \frac{\normtwo{Y}^2}{n}\le2+\nu\right\},
\end{equation}
so that the conditional probability distribution $\q^\cE$ is given by
\begin{equation}\label{eqn:condProb}
    \q^\cE = \q(X,Y \mid \cE) = \frac{\bE{\theta}{\q_\theta(X,Y)\indicator{\cE}{Y}}}{\q(\cE)}.
\end{equation}
We will later choose $\nu$ such that $\cE$ is asymptotically a high-probability event, and for now focus on establishing that $\chisquare{\q^\cE}{\p} = \littleO{1}$. Note that $\chisquare{\q^\cE}{\p} = \bE{ \p}{\left(\frac{\q^\cE}{\p}\right)^2}-1$. Now, since $\frac{\q^\cE}{\p} = \frac{\bE{\theta}{\q_\theta(X,Y)\indicator{\cE}{Y}}}{\q(\cE)\p(X,Y)}$, we have
\begin{equation*}
    \left(\frac{\q^\cE}{\p}\right)^2 = \frac{1}{\q(\cE)^2}\bE{\theta,\theta'}{\frac{\q_\theta(X,Y)\q_{\theta'}(X,Y)}{\p(X,Y)^2}\indicator{\cE}{Y}},
\end{equation*}
and so
\begin{equation*}
    \bE{\p}{\left(\frac{\q^\cE}{\p}\right)^2} = \frac{1}{\q(\cE)^2}\bE{\theta,\theta'}{\bE{Y}{\frac{\q(Y)^2}{\p(Y)^2}\bE{X}{\frac{\q_\theta(X\mid Y)\q_{\theta'}(X\mid Y)}{\p(X)^2}}\indicator{\cE}{Y}}},
\end{equation*}
where $Y\sim \cN{0,1}^n$ and $X\sim\cN{0,I_d}^n$. Since $\cE$ is a high-probability event, we have $\q(\cE)=1-\littleO{1}$. Thus,

\begin{equation}\label{eqn:secondMoment}
    \bE{\p}{\left(\frac{\q^\cE}{\p}\right)^2} = (1+\littleO{1})\bE{\theta,\theta'}{\underbrace{\bE{Y}{\frac{\q(Y)^2}{\p(Y)^2}\bE{X}{\frac{\q_\theta(X\mid Y)\q_{\theta'}(X\mid Y)}{\p(X)^2}}\indicator{\cE}{Y}}}_{:= Z}}.
\end{equation}
We will evaluate $\bE{\theta,\theta'}{Z}$ by splitting it into two cases depending on the value of $|\iptheta|$. Let $\tau \eqdef \frac{\eps^2}{4e\gamma^2\log d}$ (the choice of value for this threshold will become clear in the course of the argument).\medskip

\begin{description}
\item[Case 1: $|\iptheta|\le\tau$.]
In this case we drop the $\indicator{\cE}{Y}$ term in the expression for $Z$, simply upper bounding it by $1$. Note that in the absence of $\indicator{\cE}{Y}$ term, $Z$ becomes a product of expectations, enabling the following simplification:
\begin{equation}\label{eqn:case11}
    \bE{\theta,\theta'}{Z\indic{|\iptheta|\le\tau}} \le \bE{\theta,\theta'}{\left(\bE{y}{\frac{\q(y)^2}{\p(y)^2}\bE{x}{\frac{\q_\theta(x\mid y)\q_{\theta'}(x\mid y)}{\p(x)^2}}}\right)^n\indic{|\iptheta|\le\tau}},
\end{equation}
where $y\sim \cN{0,1}$ and $x\sim\cN{0,I_d}$. By substituting for $\q_\theta$ from (\ref{eqn:SLRT_hypotheses}) and using Corollary~\ref{cor:chi2CorrGaussian}, we get
\begin{align}\label{eqn:case12}
    \bE{x}{\frac{\q_\theta(x\mid y)\q_{\theta'}(x\mid y)}{\p(x)^2}} &\le \Ta\bE{b,b'}{\expon{\gamma^2 y^2bb'\iptheta}}.
\end{align}
Note that $\bE{}{b}=0$ and, for $\eps\in(0,1/2]$, $\abs{\bE{b}{b^\ell }}=\abs{(1-\eps)+(-1)^\ell p^\ell \eps} = (1-\eps)\abs{1+(-1)^\ell p^{\ell-1}}\le 2p^{\ell-1}$. Therefore,
\begin{align*}\label{eqn:case13}
    \bE{b,b'}{\expon{\gamma^2y^2bb'\iptheta}} &= \bE{b,b'}{1+\sum_{\ell=1}^{\infty}\frac{\gamma^{2\ell}y^{2\ell}b^\ell b'^\ell \iptheta^\ell }{\ell!}}\\
    &\le 1+\sum_{\ell=2}^{\infty}\frac{\gamma^{2\ell}y^{2\ell}\bE{b}{b^\ell }^2|\iptheta|^\ell }{\ell!}\\
    &\le 1+\frac{4}{p^2}\sum_{\ell=2}^{\infty} \frac{\gamma^{2\ell}y^{2\ell}p^{2\ell}|\iptheta|^\ell }{\ell!}.
\end{align*}
By monotone convergence theorem, it then follows that
\begin{equation*}
    \bE{y}{\frac{\q(y)^2}{\p(y)^2}\bE{b,b'}{\expon{\gamma^2y^2bb'\iptheta}}} \le \bE{y}{\frac{\q(y)^2}{\p(y)^2}}+\frac{4}{p^2}\sum_{\ell=2}^{\infty} \frac{\gamma^{2\ell}p^{2\ell}|\iptheta|^\ell }{\ell!}\bE{y}{\frac{\q(y)^2}{\p(y)^2}y^{2\ell}}.
\end{equation*}
Now, we use Lemma~\ref{lemma:momentsGaussian} and the fact that $\ell!\ge\left(\frac{\ell}{e}\right)^\ell $. Recalling that we chose $\tau = \frac{\eps^2}{4e\gamma^2\log d}$, we get
\newcommand{\Tb}{\frac{1}{\sqrt{1-\gamma^4}}}
\begin{align}\label{eqn:case13}
    \bE{y}{\frac{\q(y)^2}{\p(y)^2}\bE{b,b'}{\expon{\gamma^2y^2bb'\iptheta}}}  &\le \Tb\left(1+\frac{4c}{p^2}\sum_{\ell=2}^{\infty}\gamma^{2\ell}4^\ell e^{\ell}p^{2\ell}|\iptheta|^\ell \right)\nonumber\\
    &= \Tb\left(1+\frac{64c}{p^2}\frac{\gamma^4p^4e^2|\iptheta|^2}{\left(1-\gamma^2p^24e|\iptheta|\right)}\right)\nonumber\\
    &\le \Tb\left(1+\frac{64c\gamma^4p^2e^2|\iptheta|^2}{1-\frac{1}{\log d}}\right).
\end{align}
where $c>0$ is the constant from Lemma~\ref{lemma:momentsGaussian}. 
By (\ref{eqn:case12}), and (\ref{eqn:case13}) and for large enough $d$ we have
\begin{align*}
    \bE{y}{\frac{\q(y)^2}{\p(y)^2}\bE{x}{\frac{\q_\theta(x\mid y)\q_{\theta'}(x\mid y)}{\p(x)^2}}} &\le \Tb\Ta \left(1+\frac{64c\gamma^4p^2e^2|\iptheta|^2}{1-\frac{1}{\log d}}\right)\\
    &\le \expon{\gamma^4}\expon{\gamma^4\iptheta^2}\expon{\frac{128e^2c\cdot \gamma^4\iptheta^2}{\eps^2}}.
\end{align*}
Substituting in (\ref{eqn:case11}), we get, for some absolute constant $C>0$ (which one can take to be $C\eqdef 256e^2 c$),
\begin{align*}
    \bE{\theta,\theta'}{Z\indic{|\iptheta|\le\tau}} &\le \expon{n\gamma^4}\bE{\theta,\theta'}{\expon{C\frac{n\gamma^4\iptheta^2}{\eps^2}}\indic{|\iptheta|\le\tau}}\\
    &\le \expon{n\gamma^4}\bE{\theta,\theta'}{\expon{C\frac{n\gamma^4\iptheta^2}{\eps^2}}}.
\end{align*}
Let $\beta \eqdef \sqrt{s}\theta$ and $\beta' \eqdef \sqrt{s}\theta'$. Then, we can rewrite the abolve inequality as
\begin{align*}
    \bE{\theta,\theta'}{Z\indic{|\iptheta|\le\tau}} &\le \expon{n\gamma^4}\bE{\beta,\beta'}{\expon{C\frac{n\gamma^4\ip{\beta}{\beta'}^2}{\eps^2s^2}}}.
\end{align*}
From there, we can use an argument similar to that in the proof of \cref{prop:robSMT_tight}. It suffices to evaluate the expectation by fixing $\beta'$, say to $\beta_0$, where $\beta_0=(\underbrace{1,1,\cdots,1}_{s},0,0,\cdots,0)$. The random variable $G:=\langle \beta,\beta_0\rangle$ is then a symmetric random walk with $\hypergeom(d,s,s)$ stopping time. By Lemma~\ref{lemma:sqRademacherSum}, for $n=\littleO{\min\mleft(\frac{\eps^2 s}{\gamma^4} \log\frac{ed}{s},\frac{1}{\gamma^4}\mright)}$, we have
\begin{equation}\label{eqn:case1}
    \bE{\theta,\theta'}{Z\indic{|\iptheta|\le\tau}} \le \expon{n\gamma^4}\bE{\theta,\theta'}{\expon{C\frac{n\gamma^4G^2}{\eps^2s^2}}}
    = 1+\littleO{1}\,,
\end{equation}
which concludes the analysis of this case.\medskip

\item[Case 2: $|\iptheta|>\tau$.]
From~\eqref{eqn:secondMoment} we can rewrite
\begin{align}
    &\bE{\theta,\theta'}{Z\indic{|\iptheta|>\tau}} \notag\\
    &\qquad= \bE{\theta,\theta'}{\bE{Y}{\frac{\q(Y)^2}{\p(Y)^2}\prod_{i=1}^n\bE{x_i}{\frac{\q_\theta(x_i\mid y_i)\q_{\theta'}(x_i\mid y_i)}{\p(x_i)^2}}\indicator{\cE}{Y}}\indic{|\iptheta|>\tau}}\,.\label{eqn:case21}
\end{align}
To proceed further, we will rely on the following technical lemma, a corollary of~\cref{lemma:chiSqCorrGaussian} whose proof we defer to the end of the section:
\begin{lemma}\label{cor:chi2CorrGaussian}
Let $\phi_{\mu,\Sigma}$ denote the density function of $\cN{\mu,\Sigma}$. Then, for $\mu_1 = \frac{\gamma b\theta y}{1+\gamma^2}$, $\mu_2 = \frac{\gamma b'\theta' y}{1+\gamma^2}$, $\Sigma_1 = I-\frac{\gamma^2\theta\theta^T}{1+\gamma^2}$, and $\Sigma_2 = I-\frac{\gamma^2\theta'\theta'^T}{1+\gamma^2}$, we have
\begin{equation*}
    \bE{X\sim \phi_{0,I}}{\frac{\phi_{\mu_1,\Sigma_1}(X)\phi_{\mu_2,\Sigma_2}(X)}{\phi^2_{0,I}(X)}} \le \Ta\expon{\gamma^2y^2bb'\iptheta}.
\end{equation*}
\end{lemma}
Invoking~\cref{cor:chi2CorrGaussian}, we then can bound the inner expectations as
\begin{equation}\label{eqn:case22}
    \bE{x_i}{\frac{\q_\theta(x_i\mid y_i)\q_{\theta'}(x_i\mid y_i)}{\p(x_i)^2}} \le \Ta\bE{b_i,b_i'}{\expon{\gamma^2y_i^2b_ib_i'\iptheta}}.
\end{equation}
Let $b=(b_1,b_2,\cdots,b_n)$ and $b'=(b_1',b_2',\cdots,b_n')$ be i.i.d.\ random variables, where $b_i,b'_i\sim (1-\eps)\delta_1+\eps\delta_{-p}$. Then
\begin{align}
    \prod_{i=1}^n\bE{b_i,b_i'}{\expon{\gamma^2y_i^2b_ib_i'\iptheta}} &= \bE{b,b'}{\expon{\gamma^2\left(\sum_{i=1}^nb_ib_i'y_i^2\right)\ip{\theta}{\theta'}}}\notag\\
    &\le \bE{b,b'}{\expon{\gamma^2\abs{\sum_{i=1}^nb_ib_i'y_i^2}|\ip{\theta}{\theta'}|}}\notag\\
    &\le \expon{\gamma^2p^2\normtwo{Y}^2|\ip{\theta}{\theta'}|}. \label{eqn:case23}
\end{align}
Using (\ref{eqn:case22}) and (\ref{eqn:case23}) we get
\begin{align*}
    &\bE{Y}{\frac{\q(Y)^2}{\p(Y)^2}\prod_{i=1}^n\bE{X_i}{\frac{\q_\theta(X_i\mid y_i)\q_{\theta'}(X_i\mid y_i)}{\p(X_i)^2}}\indicator{\cE}{Y}} \\
    &\qquad\qquad\le \frac{1}{\left(1-\gamma^4\right)^{n/2}}\bE{Y}{\frac{\q(Y)^2}{\p(Y)^2}\expon{\gamma^2p^2\normtwo{Y}^2|\ip{\theta}{\theta'}|}\indicator{\cE}{Y}},\\
\end{align*}
which by using the definition of $\cE$ in~\eqref{eqn:condEvent} can be bounded as
\begin{align*}
    \bE{Y}{\frac{\q(Y)^2}{\p(Y)^2}\prod_{i=1}^n\bE{X_i}{\frac{\q_\theta(X_i\mid y_i)\q_{\theta'}(X_i\mid y_i)}{\p(X_i)^2}}\indicator{\cE}{Y}} &\le \frac{\expon{\gamma^2p^2n(2+\nu)|\iptheta|}}{\left(1-\gamma^4\right)^{n/2}}\bE{Y}{\frac{\q(Y)^2}{\p(Y)^2}}\\
    &= \frac{\expon{\gamma^2p^2n(2+\nu)|\iptheta|}}{\left(1-\gamma^4\right)^{n}}\,.
\end{align*}
Recall that $|\iptheta|\le \frac{H}{s}$, where $H\sim\hypergeom(d,s,s)$. Therefore, plugging the above equation in~\eqref{eqn:case21} yields, letting $\lambda \eqdef \frac{\gamma^2p^2n(2+\nu)}{s}$,
\begin{align}\label{eqn:case24}
    \bE{\theta,\theta'}{Z\indic{|\iptheta|>\tau}} &\le \frac{1}{\left(1-\gamma^4\right)^{n}}\bE{\theta,\theta'}{\expon{\frac{\gamma^2p^2n(2+\nu)}{s}s|\iptheta|}\indic{|\iptheta|>\tau}}\nonumber\\
    &\le \frac{1}{\left(1-\gamma^4\right)^{n}}\bE{H}{\expon{\lambda H}\indic{H>s\tau}}\nonumber\\
    &\le \expon{2n\gamma^4}\bE{H}{\expon{\lambda H}\indic{H>s\tau}}.
\end{align}
We use Lemma~\ref{lemma:hypergeometric} to evaluate the r.h.s. in the above equation as follows:    
\begin{align}\label{eqn:case25}
    \bE{H}{\exp(\lambda H)\indic{H>s\tau}} &= \sum_{h=\lceil s\tau\rceil}^s\exp(\lambda h)\probaOf{H=h}\nonumber\\
    &\le \sum_{h=\lceil s\tau\rceil}^s\exp(\lambda h)\left(\frac{es^2}{h(d-s+1)}\right)^h\nonumber\\
    &= \sum_{h=\lceil s\tau\rceil}^s\expon{\lambda h-h\log\frac{h(d-s+1)}{es^2}}\nonumber\\
    &\le \sum_{h=\lceil s\tau\rceil}^\infty\expon{-h\left(\log\frac{\tau(d-s+1)}{es}-\lambda\right)}\nonumber\\
    &= \frac{1}{1-\expon{-\left(\log\frac{\tau(d-s+1)}{es}-\lambda\right)}}\expon{-s\tau\underbrace{\left(\log\frac{\tau(d-s+1)}{es}-\lambda\right)}_{\eqdef T}}.
\end{align} 
Substituting for $\lambda$ in $T$, we get
\begin{equation*}
    T = \log\frac{\tau(d-s+1)}{es}-\frac{\gamma^2n(2+\nu)(1-\eps)^2}{\eps^2s}.
\end{equation*}
Recall that we already have fixed $\tau = \frac{\eps^2}{4e\gamma^2\log d}$, but kept $\nu$ (which appears in the definition of $\cE$, in~\eqref{eqn:condEvent}) as a free parameter. Set $\nu \eqdef \frac{\log\log d}{n}$. From the theorem statement, we have $\eps=\frac{\gamma}{C}$ and $s = d^{1-\delta}$, and thus,
\begin{align*}
    T &= \log\mleft(\frac{(d-s+1)\eps^2}{4e^2\gamma^2s\log d}\mright)-\frac{\gamma^2\mleft(2n+\log\log d\mright)(1-\eps)^2}{\eps^2s}
    = \bigTheta{\log d-\log\log d-\frac{n}{s}}.
\end{align*}
If $n=\littleO{s\log d}$, this implies 
$
    T = \littleOmega{1}.
$ 
Since $s\tau$ is an increasing function of dimension,~\eqref{eqn:case25} then further gives that
\begin{equation*}
    \bE{H}{\exp(\lambda H)\indic{H>s\tau}} = \littleO{1}
\end{equation*}
Therefore, by (\ref{eqn:case24}), if $n=\littleO{\min\mleft(s\log d,\frac{1}{\gamma^4}\mright)}$,
\begin{equation*}
    \bE{\theta,\theta'}{Z\indic{|\iptheta|>\tau}} = \littleO{1}.
\end{equation*}
which concludes the analysis of the second case.
\end{description}
Plugging the bounds derived in Case 1 and Case 2 in~\eqref{eqn:secondMoment} imply that
$\bE{\p}{\left(\frac{\q^\cE}{\p}\right)^2} = 1+\littleO{1}$ whenever $n=\littleO{\min\mleft(s\log d,1/\gamma^4\mright)}$, and so
\begin{equation*}
    \chisquare{\q^\cE}{\p} = \littleO{1}.
\end{equation*}
It remains to show that $\cE$ is a high probability event for the chosen $\nu$, i.e., $\nu=\frac{\log\log d}{n}$. Let $Q_{\chi^2_n}$ denote the $Q$-function of $\chi^2$ distribution with $n$ degrees of freedom. Then, by the definition of $\cE$,
\begin{align*}
    \q\left(\cE^C\right) 
    &= Q_{\chi^2_n}(n(2+\nu))\\
    &\stackrel{\sf(a)}{\le} Q_{\chi^2_n}(n(1+\sqrt{\nu}+\nu/2))\\
    &\stackrel{\sf(b)}{\le} \exp\mleft(-\frac{1}{4}n\nu\mright) = \exp\mleft(-\frac{1}{4}\log\log d\mright) = \littleO{1},
\end{align*}
where $\sf(a)$ is due to the fact that $1+u\ge\sqrt{u}+u/2$ for all $u\ge 0$ and $\sf(b)$ is due to standard concentration inequalities for $\chi^2_n$ random variables.

This was the last piece missing: by the argument outlined at the beginning of this proof, we conclude that
\begin{equation*}
    \dtv(\p,\q) = \littleO{1}
\end{equation*}
that is, that it is impossible to distinguish between $\cH_0$ and $\cH_1$ with constant probability, 
as long as $n=\littleO{\min\mleft(s\log d,\frac{1}{\gamma^4}\mright)}$. This concludes the proof of~\cref{theo:robSLR_test}.
\end{proof}
\noindent To conclude, it only remains to prove the technical lemma we invoked in the above proof,~\cref{cor:chi2CorrGaussian}.
\begin{proof}[Proof of~\cref{cor:chi2CorrGaussian}]
Plugging in the values of $\mu_1, \mu_2, \Sigma_1, \Sigma_2$ in Lemma~\ref{lemma:chiSqCorrGaussian}, we can expand
\begin{multline}\label{eqn:chiCorr_eval}
    \bE{X\sim \phi_{0,I}}{\frac{\phi_{\mu_1,\Sigma_1}(X)\phi_{\mu_2,\Sigma_2}(X)}{\phi^2_{0,I}(X)}} \\= \frac{1+\gamma^2}{\det(A)^{1/2}}\exp\Big(\frac{1}{2}\Big(\underbrace{\gamma^2y^2(b\theta+b'\theta')^TA^{-1}(b\theta+b'\theta')-b^2\gamma^2y^2+\frac{b^2\gamma^4y^2}{1+\gamma^2}-b'^2\gamma^2y^2+\frac{b'^2\gamma^4y^2}{1+\gamma^2}}_{\eqdef U}\Big) \Big),
\end{multline}
where $A= I+\gamma^2(\theta\theta^T+\theta'\theta'^T)$. Using the matrix inversion identity, we then have
\begin{align*}
    A^{-1} &= \left(I+\gamma^2[\theta,\theta'][\theta,\theta']^T\right)^{-1}\\
    & = I-\gamma^2[\theta,\theta']\begin{bmatrix}1+\gamma^2&\gamma^2\iptheta\\\gamma^2\iptheta&1+\gamma^2\end{bmatrix}^{-1}[\theta,\theta']^T\\
    &= I-\frac{\gamma^2}{(1+\gamma^2)^2-\gamma^4\iptheta^2}\Big(\underbrace{(1+\gamma^2)(\theta\theta^T+\theta'\theta'^T)-\gamma^2\iptheta(\theta'\theta^T+\theta\theta'^T)}_{\eqdef V}\Big).
\end{align*}
Thus,
\[
    U = \gamma^2y^2\norm{b\theta+b'\theta'}^2-b^2\gamma^2y^2-b'^2\gamma^2y^2- \frac{\gamma^4y^2\mleft((b\theta+b'\theta')^T V(b\theta+b'\theta')\mright)}{(1+\gamma^2)^2-\gamma^4\iptheta^2}+\frac{b^2\gamma^4y^2}{1+\gamma^2}+\frac{b'^2\gamma^4y^2}{1+\gamma^2}.
\]
For sufficiently small $\gamma$, i.e., smaller than some absolute constant $\gamma_0>0$ (recalling that $\theta,\theta'$ are unit vectors, and so $V$ is bounded), we get
\begin{equation}\label{eqn:Uub}
    U \le 2\gamma^2y^2bb'\iptheta.
\end{equation}
Using the matrix determinant identity, we have
\begin{align*}
    \det(A) = \det(I+\gamma^2[\theta,\theta'][\theta,\theta']^T)
    = \det\left(I+\gamma^2\begin{bmatrix}1&\iptheta\\\iptheta&1\end{bmatrix}\right)
    = (1+\gamma^2)^2-\gamma^4\iptheta^2.
\end{align*}
Thus,
\begin{equation}\label{eqn:detA_ub}
    \frac{1+\gamma^2}{\det(A)^{1/2}} = \frac{1}{\sqrt{1-\frac{\gamma^4\iptheta^2}{(1+\gamma^2)^2}}}
    \le \frac{1}{\sqrt{1-\gamma^4\iptheta^2}}.
\end{equation}
\cref{eqn:chiCorr_eval,eqn:Uub,eqn:detA_ub} together prove the lemma.
\end{proof}

\section{Discussion and Future Work}

In this section, we discuss some of the limitations of our results, which we believe are ground for possible future work. Firstly, we suspect that our lower bounds are not tight with respect to the parameters $\gamma$ and $\eps$. The dependence on $\gamma$ and $\eps$ in Theorems~\ref{prop:robSMT_tight} and \ref{prop:rob_lqSMT},   is $\bigOmega{{\eps^4}/{\gamma^4}}$. While the exact dependence on these parameters is still an open problem even for robust Gaussian mean testing (non-sparse), we conjecture that the actual dependence on these parameters should scale as $\eps^2/\gamma^4$, and hence believe that our dependence on these parameters is suboptimal. In the current proof, the bottleneck appears while obtaining an upper bound for the chi-squared correlation between two Huber-contaminated distributions, and we suspect that more advanced techniques might be required to get the right dependence on $\gamma$ and $\eps$. 

Furthermore, we restricted ourselves to the case when the covariance of the Gaussian distributions under consideration are identity matrices (``spherical Gaussians''); of course, handling the case of arbitrary covariances is a natural and important question. We believe that deriving the sample complexity in the case of non-identity (and unknown) covariance matrices would require significant additional effort, as well as new techniques and ideas. It is worth pointing out that we are not aware of any work addressing the sample complexity results for non-identity covariances even in the non-robust (but sparse) setting.

\bibliography{references.bib}

\newcommand{\etalchar}[1]{$^{#1}$}
\begin{thebibliography}{DKK{\etalchar{+}}19}

\bibitem[Bar02]{Baraud02_minimax}
Yannick Baraud.
\newblock Non-asymptotic minimax rates of testing in signal detection.
\newblock {\em Bernoulli}, 8(5):577--606, 2002.

\bibitem[BB20]{brennan20a_SCgaps}
Matthew Brennan and Guy Bresler.
\newblock Reducibility and statistical-computational gaps from secret leakage.
\newblock In Jacob Abernethy and Shivani Agarwal, editors, {\em Proceedings of
  Thirty Third Conference on Learning Theory}, volume 125 of {\em Proceedings
  of Machine Learning Research}, pages 648--847. PMLR, 09--12 Jul 2020.

\bibitem[BDLS17]{balakrishnan17a}
Sivaraman Balakrishnan, Simon~S. Du, Jerry Li, and Aarti Singh.
\newblock Computationally efficient robust sparse estimation in high
  dimensions.
\newblock In Satyen Kale and Ohad Shamir, editors, {\em Proceedings of the 2017
  Conference on Learning Theory}, volume~65 of {\em Proceedings of Machine
  Learning Research}, pages 169--212. PMLR, 07--10 Jul 2017.

\bibitem[BR13]{Berthet13_sparsePCA}
Quentin Berthet and Philippe Rigollet.
\newblock Optimal detection of sparse principal components in high dimension.
\newblock {\em Ann. Statist.}, 41(4):1780--1815, 2013.

\bibitem[CCC{\etalchar{+}}19]{carpentier19}
A.~Carpentier, O.~Collier, L.~Comminges, A.~B. Tsybakov, and Yu. Wang.
\newblock Minimax rate of testing in sparse linear regression.
\newblock {\em Autom. Remote Control}, 80(10):1817–1834, oct 2019.

\bibitem[CCT17]{Olivier17}
Olivier Collier, La\"{e}titia Comminges, and Alexandre~B. Tsybakov.
\newblock Minimax estimation of linear and quadratic functionals on sparsity
  classes.
\newblock {\em Ann. Statist.}, 45(3):923--958, 2017.

\bibitem[CMW15]{sparsePCA_Cai15}
Tony Cai, Zongming Ma, and Yihong Wu.
\newblock Optimal estimation and rank detection for sparse spiked covariance
  matrices.
\newblock {\em Probab. Theory Related Fields}, 161(3-4):781--815, 2015.

\bibitem[CW20]{Cai20_sparseMatrix}
T.~Tony Cai and Yihong Wu.
\newblock Statistical and computational limits for sparse matrix detection.
\newblock {\em Ann. Statist.}, 48(3):1593--1614, 2020.

\bibitem[DJ04]{DonohoJin04}
David Donoho and Jiashun Jin.
\newblock Higher criticism for detecting sparse heterogeneous mixtures.
\newblock {\em Ann. Statist.}, 32(3):962--994, 2004.

\bibitem[DK19]{Diakonikolas19recent}
Ilias Diakonikolas and Daniel~M. Kane.
\newblock Recent advances in algorithmic high-dimensional robust statistics,
  2019, arXiv: 1911.05911.

\bibitem[DK21]{diakonikolas21a_robCov}
Ilias Diakonikolas and Daniel~M. Kane.
\newblock The sample complexity of robust covariance testing.
\newblock In Mikhail Belkin and Samory Kpotufe, editors, {\em Proceedings of
  Thirty Fourth Conference on Learning Theory}, volume 134 of {\em Proceedings
  of Machine Learning Research}, pages 1511--1521. PMLR, 15--19 Aug 2021.

\bibitem[DKK{\etalchar{+}}16]{DKKJMS16}
Ilias Diakonikolas, Gautam Kamath, Daniel~M. Kane, Jerry Li, Ankur Moitra, and
  Alistair Stewart.
\newblock Robust estimators in high dimensions without the computational
  intractability.
\newblock In {\em 57th {A}nnual {IEEE} {S}ymposium on {F}oundations of
  {C}omputer {S}cience---{FOCS} 2016}, pages 655--664. IEEE Computer Soc., Los
  Alamitos, CA, 2016.

\bibitem[DKK{\etalchar{+}}19]{DKKPS19_robSparseEst}
Ilias Diakonikolas, Daniel Kane, Sushrut Karmalkar, Eric Price, and Alistair
  Stewart.
\newblock Outlier-robust high-dimensional sparse estimation via iterative
  filtering.
\newblock In H.~Wallach, H.~Larochelle, A.~Beygelzimer, F.~d\textquotesingle
  Alch\'{e}-Buc, E.~Fox, and R.~Garnett, editors, {\em Advances in Neural
  Information Processing Systems}, volume~32. Curran Associates, Inc., 2019.

\bibitem[DKS17]{DKS17}
Ilias Diakonikolas, Daniel~M. Kane, and Alistair Stewart.
\newblock Statistical query lower bounds for robust estimation of
  high-dimensional {G}aussians and {G}aussian mixtures (extended abstract).
\newblock In {\em 58th {A}nnual {IEEE} {S}ymposium on {F}oundations of
  {C}omputer {S}cience---{FOCS} 2017}, pages 73--84. IEEE Computer Soc., Los
  Alamitos, CA, 2017.

\bibitem[Hub64]{Huber64}
Peter~J. Huber.
\newblock {Robust Estimation of a Location Parameter}.
\newblock {\em The Annals of Mathematical Statistics}, 35(1):73 -- 101, 1964.

\bibitem[IS03]{Ingster2003}
Yu.~I. Ingster and Irina~A. Suslina.
\newblock {\em Nonparametric Goodness-of-Fit Testing Under Gaussian Models},
  volume 169.
\newblock Springer New York, 2003.

\bibitem[ITV10]{DetSparseReg_Ingster10}
Yuri~I. Ingster, Alexandre~B. Tsybakov, and Nicolas Verzelen.
\newblock Detection boundary in sparse regression.
\newblock {\em Electron. J. Stat.}, 4:1476--1526, 2010.

\bibitem[Li17]{Li2017robust}
Jerry Li.
\newblock Robust sparse estimation tasks in high dimensions, 2017, arXiv:
  1702.05860.

\bibitem[LRV16]{LRV16}
Kevin~A. Lai, Anup~B. Rao, and Santosh Vempala.
\newblock Agnostic estimation of mean and covariance.
\newblock In {\em 57th {A}nnual {IEEE} {S}ymposium on {F}oundations of
  {C}omputer {S}cience---{FOCS} 2016}, pages 665--674. IEEE Computer Soc., Los
  Alamitos, CA, 2016.

\bibitem[LSLC20]{liu20b}
Liu Liu, Yanyao Shen, Tianyang Li, and Constantine Caramanis.
\newblock High dimensional robust sparse regression.
\newblock In Silvia Chiappa and Roberto Calandra, editors, {\em Proceedings of
  the Twenty Third International Conference on Artificial Intelligence and
  Statistics}, volume 108 of {\em Proceedings of Machine Learning Research},
  pages 411--421. PMLR, 26--28 Aug 2020.

\bibitem[RXZ19]{reeves19a}
Galen Reeves, Jiaming Xu, and Ilias Zadik.
\newblock The all-or-nothing phenomenon in sparse linear regression.
\newblock In Alina Beygelzimer and Daniel Hsu, editors, {\em Proceedings of the
  Thirty-Second Conference on Learning Theory}, volume~99 of {\em Proceedings
  of Machine Learning Research}, pages 2652--2663. PMLR, 25--28 Jun 2019.

\bibitem[Tuk60]{Tukey60_samplingContaminated}
John~W. Tukey.
\newblock A survey of sampling from contaminated distributions.
\newblock In {\em Contributions to probability and statistics}, pages 448--485.
  Stanford Univ. Press, Stanford, Calif., 1960.

\bibitem[Ver12]{Verzelen12_sparseRegression}
Nicolas Verzelen.
\newblock Minimax risks for sparse regressions: ultra-high dimensional
  phenomenons.
\newblock {\em Electron. J. Stat.}, 6:38--90, 2012.

\bibitem[WX18]{WuXu18}
Yihong Wu and Jiaming Xu.
\newblock Statistical problems with planted structures: Information-theoretical
  and computational limits, 2018, arXiv: 1806.00118.

\end{thebibliography}

\end{document}